\documentclass[11pt]{article}

\usepackage{theorem}
\usepackage{fullpage}
\usepackage{ifthen}
\usepackage{verbatim}

\usepackage{amsmath,amssymb,graphicx,subfig}
\usepackage{caption}
\usepackage{natbib}
\usepackage{clrscode3e}

\usepackage{float}


\floatstyle{ruled}
\newfloat{algorithm}{thp}{lop}
\floatname{algorithm}{Algorithm}

\newcommand{\F}{\mathcal{F}}
\newcommand{\A}{\mathcal{A}}
\newcommand{\R}{\mathbb{R}}
\newcommand{\E}{\mathbb{E}}

\newcommand{\Fest}{\widehat{F}}
\newcommand{\Festm}{\widehat{F_m}}

\newcommand{\qest}{\widehat{F^{-1}}}
\newcommand{\qestm}{\widehat{F^{-1}_m}}

\newcommand{\wh}{\widehat}
\newcommand{\CH}{\mathcal{CH}}

\newcommand{\vmax}{H}

\newcommand{\Rmin}{\wh{R}_\text{min}}
\newcommand{\Rmax}{\wh{R}_\text{max}}
\newcommand{\Ralg}{R^{alg}}
\newcommand{\Ropt}{R^{opt}}

\newcommand{\I}{\mathcal{I}}

\newcommand{\M}{\mathcal{M}}
\newcommand{\dist}{F}
\newcommand{\eps}{\epsilon}

\newcommand{\val}{v}

\newcommand{\vv}{\varphi}

\newcommand{\ivv}{\overline{\varphi}}

\def\poly{\ensuremath{\mbox{poly}}}

\DeclareMathOperator*{\argmax}{arg\,max}

\newtheorem{theorem}	 			{Theorem}[section]
\newtheorem{lemma}		[theorem]	{Lemma}	
\newtheorem{fact}		[theorem]	{Fact}

\newtheorem{corollary}		[theorem]	{Corollary}
\newtheorem{prop}		[theorem]	{Proposition}
{\theorembodyfont{\rmfamily} }
{\theorembodyfont{\rmfamily} }
{\theorembodyfont{\rmfamily} \newtheorem{example}		[theorem]
	{Example}}
{\theorembodyfont{\rmfamily} }
{\theorembodyfont{\rmfamily} }
{\theorembodyfont{\rmfamily} }
{\theorembodyfont{\rmfamily} }
{\theorembodyfont{\rmfamily} }
\theoremstyle{break}
{\theorembodyfont{\rmfamily} }

\newenvironment{proof}{\noindent {\em {Proof:}}}{$\blacksquare$\vskip
	\belowdisplayskip}

\newenvironment{prevproof}[2]{\noindent {\em {Proof of
			{#1}~\ref{#2}:}}}{$\blacksquare$\vskip \belowdisplayskip}


\title{Ironing in the Dark}
\author{
	Tim Roughgarden\thanks{Department of Computer Science,
		Stanford University, 474 Gates Building, 353 Serra Mall, Stanford, CA 94305.
		This research was supported in part by NSF grant
		CCF-1215965.
		Email: {\tt tim@cs.stanford.edu}.} \and Okke Schrijvers\thanks{Department of Computer Science,
		Stanford University, 482 Gates Building, 353 Serra Mall, Stanford, CA 94305.
		This research was supported in part by NSF grant
		CCF-1215965.
		Email: {\tt okkes@cs.stanford.edu}.} 
}


\begin{document}

\maketitle

\begin{abstract}
This paper presents the first polynomial-time algorithm for position
and matroid auction environments that learns, from samples from  
an unknown bounded valuation distribution, an auction with expected
revenue arbitrarily close to the maximum possible.
In contrast to most previous work, our results apply to arbitrary (not
necessarily regular) distributions and the strongest possible
benchmark, the Myerson-optimal auction.
Learning a near-optimal auction for an irregular distribution is
technically challenging because it requires learning the appropriate
``ironed intervals,'' a delicate global property of the distribution.
\end{abstract}


\section{Introduction}

The traditional economic approach to revenue-maximizing auction
design, exemplified by Myerson~\cite{Myerson81},
posits a known prior distribution over what bidders are willing to pay,
and then solves for the auction that maximizes the seller's expected
revenue with respect to this distribution. 
Recently, there has been an explosion of work in computer science that
strives to make the classical theory more ``data-driven,''  replacing the
assumption of a known prior distribution with that of access to
relevant data, in the form of samples from an unknown distribution.  
In this paper, we study the problem of learning a near-optimal auction
from samples, adopting the formalism of Cole and Roughgarden \cite{Cole14}.
The idea of the model, inspired by PAC-learning~\cite{V84},
is to parameterize through samples the ``amount of knowledge'' that
the seller has about bidders' valuation distributions. 

We consider single-parameter settings, where each of $n$ bidders has a
private 
valuation (i.e., willingness to pay) for ``winning'' and valuation~0
for ``losing.''  
Feasible outcomes correspond to subsets of bidders
that can simultaneously win; the feasible subsets are known in
advance.\footnote{For 
  example, in auction with $k$
  copies of an item, where each bidder only wants one copy, feasible
  outcomes correspond to subsets of at most $k$ bidders.}
We assume that bidders' valuations are drawn i.i.d.\ from a
distribution~$F$ that is unknown to the seller.
However, we assume that the seller has access to $m$ i.i.d.\ samples
from the distribution~$F$ --- for example, bids that were observed in
comparable auctions in the past.
The goal is to design
a polynomial-time algorithm $A(v_1,\ldots,v_m)$ from samples
$v_i\sim F$ to
truthful auctions that, for every distribution $F$,
achieves expected revenue at least $1-\epsilon$ times the 
optimal expected revenue.\footnote{By a truthful auction, we mean one
  in which 
  truthful bidding is a dominant strategy for every bidder.  The
  restriction to dominant strategies is natural given our assumption
  of an unknown distribution.  Given this, the restriction to truthful
  auctions is without loss of generality (by the ``Revelation
  Principle,'' see~\cite{N07}).  Also, for the single-parameter
  problems that we study, there is a truthful optimal auction.}  
The sample complexity of achieving a given approximation
factor~$1-\eps$ is then the minimum number of samples~$m$ such that
there exists a learning algorithm $A$ with the desired approximation.
This model serves as potential ``sweet spot'' between worst-case and
average-case analysis, inheriting much of the robustness of the
worst-case model (since we demand guarantees for every underlying
distribution $\dist$) while allowing very good approximation guarantees.

\subsection{Our Results}

We give polynomial-time algorithms for learning a
$(1-\eps)$-approximate auction from samples, for arbitrary matroid
(or position)
auction environments and arbitrary (not necessary regular) distributions with
bounded valuations.  
(See Section~\ref{sec:prelim} for definitions.)  

Precisely, our main result is a polynomial-time algorithm that, given 
a matroid environment with $n$ bidders and $m$ i.i.d.\ samples from an
  arbitrary distribution~$F$ with maximum valuation $\vmax$,
  with probability $1-\delta$, 
  approximates the maximum-possible expected revenue
up to an
  additive loss of at most $3n\sqrt{\frac{\ln 2 \delta^{-1}}{2m}}\cdot 
\vmax$. 
Thus for every $\eps > 0$, the additive loss is at most $\eps$ (with
probability at least $1-\delta$) provided $m = 
\Omega(n^2\vmax^2\eps^{-2} \log \delta^{-1})$.
When valuations lie in $[1,\vmax]$, so that the optimal expected revenue
is bounded below, this result immediately implies the same sample
complexity bound for learning a $(1-\eps)$-(multiplicative)
approximate auction.  Our main result can also be used to give a
no-regret guarantee in a stochastic learning setting
(Section~\ref{sec:no-regret}).  

A lower bound of Cesa-Bianchi et al. \cite{Cesa-Bianchi13} implies that, already for
simpler settings, the quadratic dependence of our sample complexity bound on
$\tfrac{1}{\eps}$ is optimal.  A lower bound of Huang et al. \cite{HMR15} implies
that, already with a single bidder, the sample complexity must depend
polynomially on $\vmax$.  Whether or not the sample complexity needs
to depend on~$n$ is an interesting open question.

Our approach is based on a ``swiching trick''
(Proposition~\ref{prop:switching-trick})
and we believe it will lead to further applications.  A key idea is to
express the difference in expected revenue between an optimal and a
learned auction in a matroid or position environment purely in terms
of a difference in area under the true and estimated revenue curves.
This ``global'' analysis avoids having to compare explictly the true
and estimated virtual valuations or the optimal and learned allocation
rules.
With this approach, there is clear motivation behind each of the steps
of the learning algorithm, and the error analysis, while non-trivial,
remains tractable in settings more general than those studied in most
previous works.

\subsection{Why Irregular Distributions Are Interesting}

A major difference between this paper and much previous
work on learning near-optimal auctions from samples is that our positive
results accommodate {\em irregular distributions}.
To explain, we first note
that learning a near-optimal auction from samples 
is impossible without some restriction on the space of
possible valuation distributions~$F$.\footnote{To see this, consider
  a single-bidder problem and all distributions that take 
  on a value   $M^2$ with probability $\tfrac{1}{M}$ and 0 with
  probability $1-\tfrac{1}{M}$.  The optimal auction for such a
  distribution earns expected revenue at least $M$.  It is not difficult to
  prove that, for every $m$, there is no way to use $m$ samples
to achieve near-optimal revenue for every such distribution --- for
  sufficiently large $M$, all $m$ samples are~0 w.h.p.\ and the
  algorithm has to resort to an uneducated guess for $M$.}
This observation motivates imposing the weakest-possible conditions
under which non-trivial positive results are possible.

A majority of the literature on approximation guarantees for revenue
maximization (via learning algorithms or otherwise) restricts
attention to ``regular'' valuation distributions or subclasses
thereof; see related work below for examples and exceptions.  
Formally, a distribution $F$ with density~$f$ is {\em regular} if
\begin{equation}\label{eq:vv}
\vv(\val) = \val - \frac{1-\dist(\val)}{f(\val)}
\end{equation}
is a nondecreasing function of $\val$.  $\vv$ is also called the {\em
  virtual valuation} function.
Intuitively, regularity is a logconcavity-type assumption that
provides control over the tail of the distribution.
While many important distributions are regular,
plenty of natural distributions are not.
For example, Sivan and Syrgkanis \cite{Sivan13}
point out that mixtures of distributions (even of uniform
distributions) tend to be irregular, and yet obviously occur in the
real world.  This motivates relaxing regularity conditions while still
somehow excluding pathological distributions.
One interesting
approach is to parameterize a distribution by its ``degree of
irregularity;'' see \cite{HartlineXX,HMR15,Sivan13} for some proposals of
how to do this.  Another approach, and the one we take here, is to
bound the support of a distribution (to $[0,H]$ or $[1,H]$, say) and
allow it to be otherwise arbitrary.
Bounded but otherwise arbitrary valuation distributions clearly
capture interesting settings beyond those modeled by regular distributions.

\subsection{Why Irregular Distributions Are Hard}

To understand why irregular distributions (with bounded
valuations) are so much more technically challenging than regular
distributions, we need to review some classical optimal auction
theory.  We can illustrate the important points already in single-item
auctions. Myerson~\cite{Myerson81} proved that, for every regular
distribution~$F$, the optimal auction is simply a second-price auction
supplemented with a reserve price of~$\vv^{-1}(0)$, where~$\vv$
denotes the virtual valuation function in~\eqref{eq:vv}.  (The winner,
if any, pays the higher of the reserve price and the second-highest
bid.)  Thus, {\em learning the optimal auction reduces to learning the
  optimal reserve price}, a single statistic of the unknown
distribution.  
And indeed, for an unknown regular distribution~$F$,
there is a polynomial-time learning algorithm that
needs only $\poly(\tfrac{1}{\eps})$ samples to compute a
$(1-\eps)$-approximate auction~\cite{DRY10,HMR15}.


The technical challenge of irregular distributions is the need to {\em
  iron}.
When the virtual valuation function~$\vv$ of the distribution~$F$ is
not nondecreasing, Myerson~\cite{Myerson81} gave a recipe for
transforming $\vv$ into a nondecreasing ``ironed'' virtual valuation
function $\ivv$ such that the optimal single-item auction awards the
item to the bidder with the highest positive ironed virtual valuation
(if any), breaking ties randomly.
Intuitively, this ironing procedure identifies intervals of
non-monotonicity in $\vv$ and changes the value of the function to be
constant on each of these intervals.
(See also below and the exposition by Hartline \cite{HartlineXX}.)

The point is that the appropriate ironing intervals of a distribution
are a {\em global} property of the distribution and its (unironed)
virtual valuation function.  Estimating the virtual valuation function
at a single point --- all that is needed in the regular case --- would
appear much easier than estimating the right intervals to iron in the
irregular case.  

We present two examples to drive this point home.  The first, which is
standard, shows that not ironing can lead to a constant-factor loss in
expected revenue.  The second shows that tiny mistakes in the choice
of ironing intervals can lead to a large loss of expected
revenue.\footnote{See Appendix~\ref{sec:information} for a different
  example that demonstrates why irregular distributions are harder
  than regular ones.}
\begin{example}[Ironing Is Necessary for Near-Optimal Revenue]\label{ex:needtoiron}
The
 distribution is as 
  follows: with probability $1/H$ the value is $H$ (for a large $H$)
  and it is $1$ otherwise. The optimal auction irons the interval
  $[1,H)$ for expected revenue of $2-\frac{1}{n}$ \cite{HartlineXX},
  which approaches 2 with many bidders~$n$.  Auctions that do not
  implicitly or explicitly iron obtain expected revenue only~1.
\end{example}

\begin{example}[Small Mistakes Matter]\label{ex:matter}
Let $F$ be $5$ with probability $1/10$ and $1$ otherwise, and consider
a single-item auction with $10$ bidders. The optimal auction irons the
interval $[1,5)$ and has no reserve price. If there are at least two
  bidders with value $5$ one of them will get the item at price $5$;
  if all bidders have value $1$, one of them will receive it at price
  $1$. If there is exactly one bidder with value $5$, then her price
  is $\frac{1}{10}\cdot 1 + \frac{9}{10}\cdot 5 = \frac{46}{10}$. 
	
Now consider an algorithm that slightly \emph{overestimated} the end
of the ironing interval to be $[1,5+\epsilon)$ with $\epsilon>0$. 
(Imagine~$F$ actually has small but non-zero density above~5, so that
  this mistake could conceivably occur.)
Now
  all bids always fall in the ironing interval and therefore the item
  is always awarded to one of the players at price $1$. Not only do we
  lose revenue when there is exactly one bidder, but additionally we
  lose revenue for auctions with at least two bidders with value $5$.
This auction has even worse revenue than the standard second-price
auction, so the attempt to iron did more harm than good.
\end{example}
If we adapt Example~\ref{ex:matter} to slightly underestimate the ironing interval, there is almost no loss in revenue. This motivates an important idea in our learning algorithm, which is to closely protect against overestimation.

\subsection{Related Work}

Elkind~\cite{E07} gives a polynomial-time learning algorithm
for the restricted case of single-item auctions with discrete
distributions with known finite supports but with unknown
probabilities.  In the model
in~\cite{E07}, learning is done using an oracle that compares
the expected revenue of pairs of auctions, and $O(n^2K^2)$ oracles
calls suffice to determine the optimal auction (where $n$ is the
number of bidders and $K$ is the support size of the distributions).
Elkind~\cite{E07} notes that such oracle calls can be
implemented approximately by sampling (with high probability), but no
specific sample complexity bounds are stated.

Cole and Roughgarden \cite{Cole14} 
also give a polynomial-time algorithm for learning a
$(1-\eps)$-approximate auction for single-item auctions with possibly
non-identical bidders, under incomparable assumptions to~\cite{E07}:
valuation distributions that can be unbounded but must be strongly
regular. It is necessary and sufficient to have
$m=\poly(n,\frac{1}{\eps})$ samples, however in the analysis in
\cite{Cole14} the exponent in the upper bound is large (currently, 10).

The papers of Cesa-Bianchi et al.~\cite{CGM15} and Medina and
Mohri~\cite{MM14} give algorithms for learning the optimal
reserve-price-based single-item auction.  
Recall from Example~\ref{ex:needtoiron} that, with irregular
distributions, the best reserve-price-based auction can have expected
revenue far from optimal.

Dughmi et al.~\cite{dughmi2014sampling} proved negative
results (exponential sample complexity) for learning near-optimal
mechanisms in  multi-parameter settings that are much more complex
than the single-parameter settings studied here.  The paper also
contains positive results for restricted classes of mechanisms.

Huang et al. \cite{HMR15} give optimal sample complexity bounds for
the special case of a single bidder under several different
distributional assumptions, including 
for the case of bounded irregular distributions where they need $O(H\cdot \eps^{-2}\cdot\log(H\eps^{-1}))$ samples.

Morgenstern and Roughgarden \cite{MR15} 
recently gave general sample complexity
upper bounds which are similar to ours and cover all 
single-parameter settings (matroid and otherwise), although the
(brute-force) learning algorithms in \cite{MR15} are not
computationally efficient. 

For previously studied models about revenue-maximization with an unknown
  distribution, which differ in various respects from the model of Cole and Roughgarden
  \cite{Cole14}, see \cite{BBDSics11,Cesa-Bianchi13,KL2003focs}.
For other ways to parameterize partial knowledge about valuations,
see e.g.~\cite{ADMW13,CMZ12}.
For other uses of samples in auction design that differ from ours,
see Fu et al.~\cite{FHHK14}, who use samples to extend the 
Cr\'emer-McLean theorem~\cite{CM85}
to partially  known valuation distributions, and Chawla et
al.~\cite{Chawla14}, which is discussed further below.
For asymptotic optimality results in various symmetric 
settings (single-item auctions, digital goods), which identify
conditions under which the expected revenue of some auction of interest (e.g., second-price) approaches the
optimal with an increasing number of i.i.d.\ bidders, see
Neeman~\cite{N03},
Segal~\cite{segal},
Baliga and Vohra~\cite{BV03},
and Goldberg et al.~\cite{G+06}.
For applications of learning theory
concepts to prior-free auction design in unlimited-supply settings,
see Balcan et al.~\cite{BBHM08}.

Finally, the technical issue of ironing from samples comes up also in 
Ha and Hartline \cite{Ha13} 
and Chawla et al. \cite{Chawla14}, in models incomparable to the one
studied here.
The setting of \cite{Ha13} is constant-factor approximation guarantees
for prior-free revenue maximization, where
the goal is input-by-input rather than distribution-by-distribution
guarantees.  
Chawla et al. \cite{Chawla14} study non-truthful auctions, where bidders' true valuations need to be inferred from equilibrium bids, and aim to learn the optimal ``rank-based auction,'' which can have expected revenue a constant factor less than that of an optimal auction.
Our goal of obtaining a $(1-\eps)$-approximation of the maximum revenue
achieved by any auction is impossible in the settings
of~\cite{Ha13,Chawla14}.

Summarizing, this paper gives the first polynomial-time algorithm for
position and matroid environments that learns, from samples from 
an unknown irregular valuation distribution, an auction with expected
revenue arbitrarily close to the maximum possible.

\subsection{Organization}
Section~\ref{sec:prelim} covers learning and auction preliminaries
important for our analysis.  Section~\ref{sec:single-item} presents
many of our key ideas in the special case of a single-item auction.
Section~\ref{sec:environments} extends our results to position and matroid
auction environments.  Section~\ref{sec:no-regret} deduces a no-regret
guarantee from our learning algorithms.

\section{Preliminaries}
\label{sec:prelim}

\subsection{Empirical Cumulative Distribution Function and the DKW Inequality}
\label{sec:dkw}

Let $X=\{X_i\}_{i=1}^m$ be a set of $m$ samples, and let $X^{(i)}$ be
the $i^\text{th}$ order statistic. We use the standard notion of the
empirical cumulative distribution function (empirical CDF): 
$\wh{F_m}(v) =\frac{1}{m}\cdot|\{X_i:X_i \leq v\}|$.
The empirical CDF is an estimator for the quantile of a given
value. The Dvoretzky-Kiefer-Wolfowitz (DKW) inequality
\cite{Dvoretzky56,Massart90} states that the difference between the empirical CDF and the actual CDF decreases
quickly in the number of samples. Let $\epsilon_{m,\delta} =
\sqrt{\frac{\ln 2\delta^{-1}}{2m}}$, then
$\Pr\left[\sup_{v\in \mathbb{R}}\left|F(v)-\wh{F_m}(v)\right| \leq \epsilon_{m,\delta}\right]\geq 1-\delta$.
So the largest error in the empirical CDF shrinks as
$O(m^{-1/2})$.
For our purposes we will not need the CDF $F$, but rather its inverse $F^{-1}$. Define 
$\qestm(x)$ as
$X^{(\max(1,\lceil x\cdot m\rceil))}$ for $x \in [0,1]$.
(For convenience, define
$\qestm(x)$ as 0 if $x < 0$ and $\vmax$ if $x > 1$.)
By definition, for all $v\in[0,\vmax]$:
\begin{align}
\label{eq:F-1}
\qestm\left(\Festm(v)\right)\quad \leq\quad v \quad \leq \quad \qestm\left(\Festm(v)+\frac1m\right).
\end{align}
In the remainder of this paper, we will use $\wh{F}$, $\wh{F^{-1}}$,
and $\epsilon$ without explicitly referring to the number of samples
$m$ and confidence parameter $\delta$. 

\subsection{Optimal Auctions using the Revenue Curve}
\begin{figure}
	\centering
	\subfloat[Revenue curve $R(q)$ and the optimal induced revenue curve $R^{\star}(q)$.]{\includegraphics{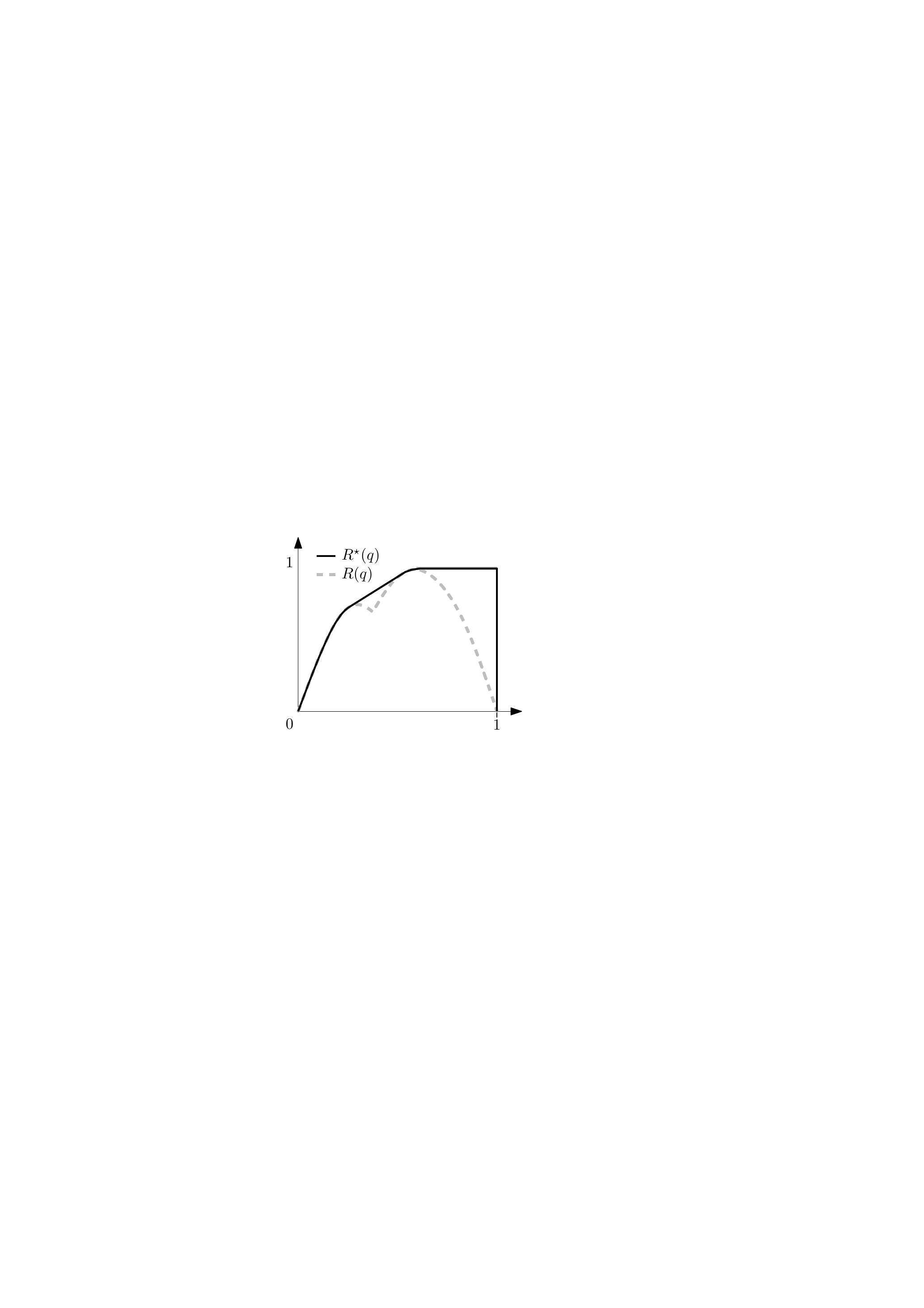}\label{fig:sub-rev-curve}}
	\quad
	\subfloat[Virtual value function $\vv(q)$ and the induced virtual value function $\vv^{\star}(q)$.]{\includegraphics{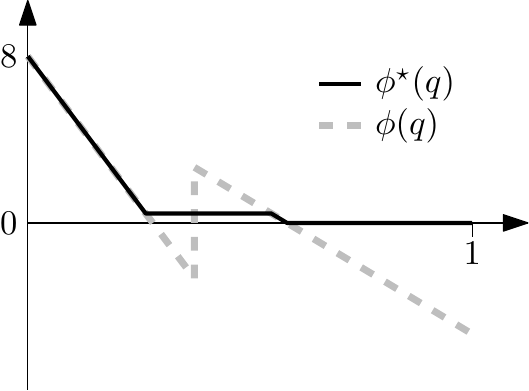}
		\label{fig:sub-virtual-value}}
	\caption{The dashed gray line is the revenue curve $R(q)$ and
          its derivative, the virtual value function $\vv(q)$, for a
          irregular bimodal distribution. In black we have the curves
          $\vv^{opt}$ and $R^{opt}$ corresponding to the optimal
          ironing interval, and optimal reserve price for this
          distribution. This example taken from \cite[Chapter
            3]{HartlineXX}.} 
	\label{fig:rev-curve}
\end{figure}


\paragraph{Revenue Curve and Quantile Space} 
For a value~$v$ and a distribution~$F$, we define the corresponding
{\em quantile} as $1-F(v)$.\footnote{This is for consistency with
  recent literature; ``reversed quantile'' might be a more accurate term.}
The quantile of $v$ corresponds to the sale probability for a single
bidder with valuation drawn from $F$ as the posted price~$v$.
Note that higher values correspond to lower quantiles.
The \emph{revenue curve}
$R$ for value distribution $F$ is the curve, in quantile space, that
is defined as $R(q) = q\cdot F^{-1}(1-q)$, see the dashed curve in
Figure~\ref{fig:sub-rev-curve}. 
Myerson \cite{Myerson81} showed that the ex ante expected revenue for a bidder is
$\E_{q\sim U[0,1]}[-R(q)\cdot y'(q)]$, where $y$ is the interim allocation function for the standard Vickrey auction (defined formally below).
The derivative of $R$ is the virtual
value function $\vv(.)$ --- recall~\eqref{eq:vv} --- in quantile space, see
Figure~\ref{fig:sub-virtual-value}.
For regular distributions 
the
revenue curve is concave and the virtual value function
non-increasing, but for irregular distributions this is not the case. 

Myerson \cite{Myerson81} showed that for any non-concave revenue curve $R$,
one can create an allocation rule that will yield the same revenue as
$R$'s convex hull $\CH(R)$. This procedure is called ironing, and for
each interval where $R$ differs from $\CH(R)$, we take the virtual
value to be slope of the convex hull over this interval. This means
the virtual values are constant, and hence any two bids in that range
are interchangeable, so the ironed allocation function is constant on
this interval.\footnote{It takes some effort to show that keeping the
  allocation probability constant on an interval has exactly the
  effect we described here \cite{Myerson81}.} The resulting
\emph{ironed} revenue curve $R^\star$ will be concave and the 
corresponding ironed virtual
value function $\vv^\star$ will be monotonically non-increasing. 
It is also useful to think of a reserve price~$r$ (with corresponding
quantile 
$q_r$) in a similar way, as effectively changing the virtual valuation
function so that
$\vv^{\star}(q)=0$ whenever $q\geq q_r$
(Figure~\ref{fig:sub-virtual-value}), 
with the corresponding revenue curve $R^{\star}$ 
constant in that region (Figure~\ref{fig:sub-rev-curve}).\footnote{Most of the existing literature would not consider the effect of the reserve price on the revenue curve, in which case the black and dashed lines would coincide after the second peak. However, by including its effect as we did, we'll be able to apply the Switching Trick described below.}

More generally, given any set
of disjoint ironing intervals $\I$ and reserve price
$r$, both in value space, we can imagine these modifying the revenue
curve as follows.  (For now this is a thought experiment;
Proposition~\ref{prop:switching-trick} connects these revenue curve
modifications back to allocation rule modifications.)
Let $R$ be the revenue curve without ironing
or a reserve price, and define $R^{(\I,r)}$ as the revenue curve 
\emph{induced} by a set $\I$ of ironing intervals and reserve price
$r$.
This curve is defined by
\begin{align}
R^{(\I,r)}(q) = \begin{cases}
R(F(r)) & \text{if }q>F(r)\\
R(q_a) + \frac{q-q_a}{q_b-q_a}\left(R(q_b)-R(q_a)\right) & \text{if
}q\in(q_{a},q_{b}]\text{ with } [F^{-1}(1-q_{b}),F^{-1}(1-q_{a}))\in \I\\ 
R(q) & \text{otherwise}.
\end{cases}\label{eq:induced-R}
\end{align}

Given $\I$ and $r$ as above, we define the auction $A^{(\I,r)}$ as
follows: given a bid profile: (i) reject every bid with $b_i < r$;
(ii) for each ironing interval $[a,b)\in \I$, 
treat all bids $\{b_j :   a \leq b_j < b\}$ as identical (equal to
some common number between $a$ and $b$);
(iii) among the remaining bidders, maximize the social welfare (sum of
the ironed bids of the winners); (iv) charge payments so that losers
always pay~0 
and so that truthful bidding is a dominant strategy for every player.
This auction is well defined (i.e., independent of the choice of the
common numbers in (ii)) in settings where the computation in~(iii)
depends only on the ordering of the ironed bids, and not on their
numerical values.  In this case, the payments in~(iv) are uniquely
defined (by standard mechanism design results).
This is the case in every matroid
environment\footnote{In a matroid environment, the set $\F$ of
  feasible outcomes satisfies: (i) (downward-closed) $T \in \F$ and $S
  \subseteq T$ implies $S \in F$; and (ii) (exchange property)
whenever $S,T \in \I$ with $|T| < |S|$, there is
some $i \in S \setminus T$ such that $T \cup \{i\} \in \I$.} and also in
position auctions (see~\cite{EOS07,V07}).
In such a setting, we use $\A$ to denote the set of all auctions of
the form $A^{(\I,r)}$.
We restrict attention to such settings for the rest of the paper.

\paragraph{The Switching Trick}
Given a distribution $F$,
we explained two ways to use ironing intervals~$\I$ and a reserve
price~$r$: (i) to define a modified revenue curve $R^{(\I,r)}$ (and
hence virtual valuations); 
or (ii) to define an auction $A^{(\I,r)}$.  The ``switching trick''
formalizes the connection between them: the expected 
virtual welfare of the
welfare-maximizing auction with the modified virtual valuations
(corresponding to the derivative of $R^{(\I,r)}$) equals 
the expected virtual welfare of the modified auction $A^{(\I,r)}$ with
the original virtual valuations.

More formally,
let $x_i : \R_+^n \rightarrow \R_+$ be the ex-post allocation function of the welfare maximizing truthful auction
that takes the bids ${\bf b}$ of all players and results in the
allocation to bidder $i$. The \emph{interim} allocation function $y_i
: [0,1] \rightarrow \R_+$ is the expected allocation to bidder $i$
when her quantile is $q$, where the expectation is over the quantiles
of the other bidders: $y_i(q_i) = \E_{{\bf q}_{-i}\sim U[0,1]^{n-1}}
[x(F^{-1}(1-q), F^{-1}({\bf 1}-{\bf q}_{-i}))]$ where $F^{-1}({\bf 1}-{\bf q}_{-i})$ is ${\bf b}_{-i}$
for which each $b_j = F(1-q_j)$.  
%
For example, in the standard Vickrey (single-item) auction with $n$
bidders, every 
bidder~$i$ has the interim allocation function $y_i(q) =
(1-q)^{n-1}$.\footnote{In general matroid settings,
  different bidders can have different interim allocation functions
  (even though valuations are i.i.d.)}

For every auction of the form $A^{(\I,r)}$,
the interim allocation function
$y_i^{(\I,r)}$ of a bidder~$i$
can be expressed in terms of the interim allocation
function $y_i$ without ironing and reserve price (see also
Figure~\ref{fig:sub-virtual-value}): 
\begin{align}
y_i^{(\I,r)}(q) = \begin{cases}
0 & \text{if }q>F(r)\\
\frac{1}{q_b - q_a}\int_{q_a}^{q_b} y(q) dq & \text{if
}q\in[q_a,q_b)\text{ with } [F^{-1}(1-q_b),F^{-1}(1-q_a))\in \I\\
y_i(q) & \text{otherwise}.
\end{cases}\label{eq:induced-y}
\end{align}


\begin{prop}[Switching Trick]
\label{prop:switching-trick}
Consider a matroid or position auction setting, as above.  For every
valuation distribution $F$, every reserve price $r$, every set $\I$ of
disjoint ironing intervals, and every bidder $i$,
\begin{align*}
\E_{q\sim U[0,1]}[R(q)\cdot (y_i^{(\I,r)})'(q)]
= \E_{q\sim U[0,1]}[R^{(\I,r)}(q)\cdot y_i'(q)].
\end{align*}
\end{prop}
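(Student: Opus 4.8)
The plan is to read both sides of the Proposition as Riemann--Stieltjes integrals over $[0,1]$ --- $\int_0^1 R\,dy_i^{(\I,r)}$ and $\int_0^1 R^{(\I,r)}\,dy_i$, so that a downward jump of $y_i^{(\I,r)}$ at an endpoint of an ironing interval is charged correctly --- and to localize the identity to the quantiles where ironing or the reserve actually changes something. Let $S\subseteq[0,1]$ be the union of the quantile intervals $(q_a,q_b]$ corresponding to the ironing intervals of $\I$ together with the reserve region $\{q:q>F(r)\}$; by~\eqref{eq:induced-R} and~\eqref{eq:induced-y} we have $R^{(\I,r)}=R$ and $y_i^{(\I,r)}=y_i$ off $S$. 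Adding and subtracting $\int_0^1 R^{(\I,r)}\,dy_i^{(\I,r)}$, the difference of the two sides becomes
\[
\int_0^1\big(R-R^{(\I,r)}\big)\,dy_i^{(\I,r)}\;+\;\int_0^1 R^{(\I,r)}\,d\!\left(y_i^{(\I,r)}-y_i\right),
\]
and in both integrands the non-vanishing part is supported on $S$; so it suffices to show that each piece of $S$ contributes $0$.

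I would first note the first integral vanishes outright. On an ironing quantile interval $(q_a,q_b]$, $R^{(\I,r)}$ is by definition the chord of $R$ through $(q_a,R(q_a))$ and $(q_b,R(q_b))$, so $R-R^{(\I,r)}$ is continuous there and equals $0$ at both $q_a$ and $q_b$, while $y_i^{(\I,r)}$ is constant on the interval, so $dy_i^{(\I,r)}$ charges only the points $q_a,q_b$. A measure supported on $\{q_a,q_b\}$, paired with a function vanishing at $q_a$ and $q_b$, integrates to $0$. On the reserve region it is the same: $R^{(\I,r)}\equiv R(F(r))$ there, and $R-R^{(\I,r)}$ vanishes at the single point where $y_i^{(\I,r)}$ jumps.

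For the second integral I would compute piecewise. On an ironing interval $(q_a,q_b]$ put $\bar y=\tfrac{1}{q_b-q_a}\int_{q_a}^{q_b}y_i(q)\,dq$, so that $y_i^{(\I,r)}\equiv\bar y$ there, and note the chord has constant slope $c=\tfrac{R(q_b)-R(q_a)}{q_b-q_a}$. Integrating $\int_{q_a}^{q_b}R^{(\I,r)}\,d(y_i^{(\I,r)}-y_i)$ by parts --- the boundary terms use $R^{(\I,r)}(q_a)=R(q_a)$ and $R^{(\I,r)}(q_b)=R(q_b)$, and $y_i^{(\I,r)}-y_i$ is $0$ just outside $[q_a,q_b]$ --- leaves an expression in $R(q_a),R(q_b),y_i(q_a),y_i(q_b),\bar y$ and $c\int_{q_a}^{q_b}y_i$; substituting $\int_{q_a}^{q_b}y_i=\bar y(q_b-q_a)$ and $c(q_b-q_a)=R(q_b)-R(q_a)$ collapses everything to $0$. (Equivalently: integrating by parts globally recasts the Proposition as $\int_0^1\vv\,y_i^{(\I,r)}=\int_0^1\vv^{(\I,r)}\,y_i$ with $\vv=R'$, $\vv^{(\I,r)}=(R^{(\I,r)})'$; over an ironing interval the left integrand integrates to $\bar y\int_{q_a}^{q_b}\vv=\bar y\,(R(q_b)-R(q_a))$ and the right to $c\int_{q_a}^{q_b}y_i=c\,\bar y(q_b-q_a)$, which agree, while over the reserve region both $\vv^{(\I,r)}$ and $y_i^{(\I,r)}$ are identically $0$.) On the reserve region the second integral telescopes to a boundary term at $q=1$ proportional to $y_i(1)$, which is $0$ in the matroid and position environments considered; the boundary at $q=0$ is killed by $R(0)=R^{(\I,r)}(0)=0$. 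Summing the zero contributions over $S$ and adding the complement of $S$, where the two integrands are literally equal, completes the proof.

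The part I expect to be the real work is the measure-theoretic bookkeeping buried in the two displayed integrals: $(y_i^{(\I,r)})'$ is a signed measure carrying an atom at each endpoint of each ironing interval (and at $F(r)$), and the whole argument hinges on the fact --- baked into the definition of $R^{(\I,r)}$ as the chord --- that $R-R^{(\I,r)}$ is exactly $0$ at those atoms, which is precisely the sense in which ``averaging $y_i$ over an interval'' and ``replacing $R$ by its chord over that interval'' are dual. The only other thing to watch is that when two pieces of $S$ abut (two ironing intervals, or an ironing interval and the reserve region) no endpoint atom is double-counted or dropped; the half-open conventions in~\eqref{eq:induced-R}--\eqref{eq:induced-y} handle this, and the rest is routine computation.
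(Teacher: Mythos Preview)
Your proof is correct and follows the same two-step decomposition as the paper --- both pass through the intermediate $\int_0^1 R^{(\I,r)}\,dy_i^{(\I,r)}$ --- the only difference being that the paper wraps the two resulting equalities in the language of an auxiliary distribution $F^{(\I,r)}$ (constructed to have revenue curve $R^{(\I,r)}$) and then defers the actual verifications to ``filling in the definitions from~\eqref{eq:induced-R} and~\eqref{eq:induced-y},'' whereas you carry those computations out explicitly with careful attention to the atoms of $dy_i^{(\I,r)}$. One minor overstatement: $y_i(1)=0$ need not hold in every matroid environment (e.g., a coloop has $y_i\equiv 1$), but the paper's argument does not address this boundary term either.
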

\begin{proof}
	Fix $F$, $\I$, $r$, and $y$. Let $y^{(\I,r)}$ be the interim
	allocation rule from running auction $A_{\I,r}$. Let $R$ be the
	revenue curve of $F$ and let $R^{(\I,r)}$ denote the revenue curve
	induced by $\I$ and $r$. 
	
	\begin{itemize}		
		
		\item Define a distribution $F^{(\I,r)}$ (which is not equal to $F$
		unless $\I=\emptyset$ and $r=0$) that has the property that its
		revenue curve $q\cdot F^{(\I,r)}(1-q)$ is $R^{(\I,r)}$. To see that
		this is well-defined, observe the following. Any line $\ell$ through
		the origin only intersects $R^{(\I,r)}$ once (if there are point
		masses in $F$ then a line through the origin intersect $R$ in a
		single interval). This means that we can use $R^{(\I,r)}$ to
		construct $F^{(\I,r)}$: $F^{(\I,r)}(v)$ is the $q$ for which $q\cdot
		v$ intersects with $R^{(\I,r)}(q)$ (if there are any point masses
		then there will be a range of $q$ for which this is the case; in
		that case take the largest such $q$). 
		Alternatively, see Hartline and Roughgarden \cite{HR08} for an explicit formula for $F^{(\I,r)}$.
		
		\item If we run the same auction $A_{\I,r}$ on bidders with values drawn
		from $F^{(\I,r)}$, the expected revenue is identical to the auction
		with bidder values drawn from $F$: $$\E_{q\sim U[0,1]}[R(q)\cdot
		\left(y^{(\I,r)}\right)'(q)]= \E_{q\sim
			U[0,1]}[R^{(\I,r)}(q)\cdot \left(y^{(\I,r)}\right)'(q)].$$ This
		can easily be seen by filling in the definitions from
		\eqref{eq:induced-R} and \eqref{eq:induced-y}.  
		
		\item If the bidders have distribution $F^{(\I,r)}$, then we might as
		well not iron or have a reserve price at all; so $$\E_{q\sim
			U[0,1]}[R^{(\I,r)}(q)\cdot
		\left(y^{(\I,r)}\right)'(q)]=\E_{q\sim
			U[0,1]}[R^{(\I,r)}(q)\cdot y'(q)].$$ This is also easily seen by
		filling in the definitions. 
		
	\end{itemize}
\end{proof}


\subsection{Notation}
In the remainder of this paper, our analysis will rely on bounding the difference in revenue of an auction with respect to the optimal auction in terms of their revenue curves. We will use the following conventions, see Table~\ref{tbl:notation}. The unaltered revenue curve for distribution is denoted by $R(q) = q\cdot F^{-1}(1-q)$. To denote when we use an estimator for a revenue, i.e. a revenue curve that is constructed based on samples, we use a hat: $\widehat{R}(q) = q\cdot \qest(1-q)$. Based on the available samples we construct high-probability upper and lower bounds for $R$, that are thus denoted as $\Rmax(q) = q\cdot \qest(1-q+\eps+\frac{1}{m})$ and $\Rmin(q) = q\cdot \qest(1-q-\eps)$.

\begin{table}
\centering
\begin{tabular}{|l|l|}
	\hline
	Revenue Curve & Description \\
	\hline\hline
	$R$ & $q\cdot F^{-1}(1-q)$\\
	$\Rmin$ & $q\cdot \qest(1-q-\eps)$\\
	$\Rmax$ & $q\cdot \qest(1-q+\eps+\frac1m)$\\
	\hline
\end{tabular}
\caption{Overview of notation for revenue curves.}
\label{tbl:notation}
\end{table}

We use superscript to denote when a revenue curve is ironed and has a reserve price. For a general set of ironing intervals $\I$ and reserve price $r$, $R^{(\I,r)}$ is the revenue curve induced by it, see \eqref{eq:induced-R}. The superscript $\star$ denotes that the revenue curve is optimally ironed and reserved, i.e. $R^\star$ is the revenue curve of Myerson's auction using $F$, and $\Rmax^\star$ is the revenue curve corresponding to the convex hull of $\Rmax$ that additionally stays constant after the highest point. Finally, we'll use $R^{alg}$ and $R^{opt}$ to denote an algorithm ALG's revenue curve and the optimal revenue curve for $F$ respectively (thus $R^{opt}=R^\star$, but we'll use $R^{opt}$ to emphasize its relation to $R^{alg}$).

For the ironing intervals $\I$ (and reserve price $r$) we use $\I_q$ (resp. $r_q$) when it is important that the ironing intervals are defined in \emph{quantile space}. Finally, $\I_{opt}$, $\I_{alg}$, and $\I_{max}$ (and similarly for reserve $r$) refer to the ironing intervals of the optimal auction, algorithm ALG and the optimal ironing intervals for $\Rmax$ respectively.

\section{Additive Loss in Revenue for Single-Item Auctions}
\label{sec:single-item}

In this section we describe an algorithm that takes a set $X$ of $m$
samples, and a confidence parameter $\delta$ as input, and outputs a
set $\I$ of ironing intervals and a reserve price $r$, both in value
space. 
We focus on the case where $\I$ and $r$ are used in a
single-item auction $A_{(1)}^{\I,r} \in \A$ (recall the notation in
Section~\ref{sec:prelim}) and show that the
additive loss in revenue of $A_{(1)}^{\I,r}$ with respect to the
revenue of the optimal auction $A^{opt}_{(1)}$ for single-item
auctions is $O(\epsilon \cdot n \cdot \vmax)$, with
$\epsilon=\sqrt{\frac{\ln 2\delta^{-1}}{2m}}$. In section~\ref{sec:environments} we extend the results to matroid and position auctions.

\begin{theorem}[Main Theorem]
	\label{thm:main-theorem}
	For a single-parameter environment with optimal auction of the form
        $A^{(\I,r)}$ with $n$ i.i.d.\ bidders with values from unknown
        irregular distribution $F$, $m$ i.i.d.\ samples from $F$, with
        probability $1-\delta$, the additive loss in expected revenue
        of Algorithm~\ref{alg:em} compared to the optimal expected
        revenue is at most $3\cdot n\cdot \vmax\cdot \sqrt{\frac{\ln
            2\delta^{-1}}{2m}}$. 
\end{theorem}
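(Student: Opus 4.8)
The plan is to use the Switching Trick (Proposition~\ref{prop:switching-trick}) to reduce the comparison of two auctions' expected revenues to a comparison of areas under revenue curves in quantile space, and then to exploit the DKW inequality to control the gap between the true curve $R$ and its sample-based sandwich $\Rmin \leq R \leq \Rmax$ (which holds pointwise with probability $1-\delta$). Summing the single-bidder revenue $\E_{q\sim U[0,1]}[R^{(\I,r)}(q)\cdot y_i'(q)]$ over the $n$ bidders, the total expected revenue of auction $A^{(\I,r)}$ becomes $\sum_i \E_q[R(q)(y_i^{(\I,r)})'(q)]$, which by the Switching Trick equals $\sum_i \E_q[R^{(\I,r)}(q) y_i'(q)]$. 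So it suffices to compare $\sum_i \E_q[R^{opt}(q) y_i'(q)]$ with $\sum_i \E_q[R^{alg}(q) y_i'(q)]$, i.e.\ to bound $\sum_i \E_q[(R^{opt}(q)-R^{alg}(q)) y_i'(q)]$.

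The core idea — this is where the ``closely protect against overestimation'' remark pays off — is that the algorithm should output the ironing intervals and reserve price that are optimal for $\Rmax$ (i.e.\ it takes $R^{alg} = \Rmax^\star$, the convex hull of $\Rmax$ held constant past its peak). Then I would argue the chain
\[
R^{opt}(q) \;\leq\; (\Rmax)^\star(q) \;=\; R^{alg}(q)+\big(\Rmax^\star(q)-R^{alg}(q)\big),
\]
where the first inequality uses that $R \leq \Rmax$ pointwise implies $R^\star = \CH(R)$-type-curve $\leq \CH(\Rmax)$-type-curve $= \Rmax^\star$ (optimal ironing is monotone in the revenue curve), and the last term is zero by our choice. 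Conversely $R^{alg} = \Rmax^\star \geq \Rmax \geq R$, so naively $R^{alg}$ could overshoot $R^{opt}$; but we never need a lower bound on $R^{opt} - R^{alg}$ in that direction for revenue — what we need is to lower-bound the \emph{achieved} revenue $\sum_i\E_q[R(q)(y_i^{alg})'(q)]$, and here I would instead use $R \geq \Rmin$ together with the fact that $\Rmin$ and $\Rmax$ differ pointwise by at most the DKW error. Concretely, $\Rmax(q)-\Rmin(q) = q(\qest(1-q+\eps+\tfrac1m) - \qest(1-q-\eps))$, and since shifting the quantile argument by $2\eps+\tfrac1m$ moves the order statistic, this difference is controlled; combined with $\vmax$ being the support bound it is at most $O(\eps)\vmax$ after integrating, and each $y_i$ contributes a total variation of at most $1$ in the integration by parts, so the per-bidder loss is $O(\eps\vmax)$, hence $O(n\eps\vmax)$ overall. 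Tracking the three sources of error — DKW slack $\eps$, the discretization $\tfrac1m$, and the $\Rmax$-vs-$\Rmin$ gap — should give the constant $3$.

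The main obstacle I anticipate is the monotonicity-of-ironing step: showing that if $R_1 \leq R_2$ pointwise on $[0,1]$ then the optimally-ironed-and-reserved curve $R_1^\star \leq R_2^\star$ pointwise. This is intuitive (the convex hull is monotone, and appending a horizontal segment from the peak preserves order if done carefully), but it needs care because the reserve-price truncation caps the curve at its \emph{own} maximum height, so a taller curve gets truncated at a taller level — one must verify the inequality survives the truncation, which it does since $R_1^\star(q) \le \max_q R_1(q) \le \max_q R_2(q)$ and on the non-truncated region convex hulls are ordered. A secondary subtlety is that the clean integration-by-parts identity $\E_q[R(q)y'(q)] = -\E_q[R'(q)y(q)] + \text{boundary}$ and the bound $\sum_i \int_0^1 |y_i'(q)|\,dq \le n$ (or a suitable telescoping over order statistics in the matroid case) must be argued, but in the single-item case $y_i(q)=(1-q)^{n-1}$ is monotone so $\int|y_i'| = 1$ exactly, keeping that part routine.
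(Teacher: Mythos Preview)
Your proposal has the right opening move --- the Switching Trick reduction to $\sum_i \int (R^{opt}(q)-R^{alg}(q))(-y_i'(q))\,dq$ and then the bound $\le n\cdot\max_q(R^{opt}-R^{alg})$ is exactly Lemma~\ref{lem:tech-lemma} in the paper --- and you correctly flag the monotonicity-of-ironing step $R\le\Rmax\Rightarrow R^{opt}\le\Rmax^\star$ (Lemma~\ref{lem:opt-ub}). But there are two genuine gaps after that.

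First, you have the algorithm backwards. Algorithm~\ref{alg:em} (via \textsc{ComputeAuction}) builds $\Rmin(q)=q\cdot\qest(1-q-\eps)$, takes \emph{its} convex hull, and outputs the resulting ironing intervals and reserve in value space using the conservative map $q\mapsto\qest(1-q-\eps)$. It never touches $\Rmax$. ``Protect against overestimation'' means exactly this: use the pessimistic lower envelope so that you never iron past the true interval endpoint (Example~\ref{ex:matter}). Your proposal to iron according to $\Rmax$ would push the intervals outward and is precisely the direction the paper warns against.

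Second, and more fundamentally, $R^{alg}$ is \emph{not} equal to any ironed empirical curve. By definition $R^{alg}=R^{(\I_{alg},r_{alg})}$ is the \emph{true} curve $R$ with the algorithm's \emph{value-space} intervals applied via~\eqref{eq:induced-R}. Because the algorithm only knows values, not true quantiles, the quantile interval that $[\qest(1-b-\eps),\qest(1-a-\eps))$ carves out on $R$ is different from $[a,b)$ on $\Rmin$. So the identity ``$R^{alg}=\Rmax^\star$'' (or $\Rmin^\star$) is a category error, and the chain $R^{opt}\le\Rmax^\star=R^{alg}$ collapses. What the paper actually proves (Lemma~\ref{lem:alg-lb}) is the inequality $R^{alg}(q)\ge\Rmin^\star(q)$, and the argument is geometric: a line through the origin hitting $\Rmin$ at quantile $q$ hits $R$ at the quantile $q'$ where $F^{-1}(1-q')=\qest(1-q-\eps)$, and since $\Rmin\le R$ the intersection with $R$ is always further out, so the value-space ironing on $R$ dominates the optimal ironing of $\Rmin$. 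This value-versus-quantile translation is the crux you are missing; your fallback to ``lower-bound the achieved revenue via $R\ge\Rmin$'' does not recover it, because after swapping $R$ for $\Rmin$ you would still need to apply the Switching Trick with the algorithm's value-space intervals on the \emph{wrong} curve.

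A smaller point: your sketch bounds $\Rmax(q)-\Rmin(q)$, but what is needed is $\Rmax^\star(q)-\Rmin^\star(q)$. Pointwise, $\Rmax(q)-\Rmin(q)=q\big(\qest(1-q+\eps+\tfrac1m)-\qest(1-q-\eps)\big)$ can be $\Theta(\vmax)$ at a jump of $F$; the paper's Lemma~\ref{lem:loss-rev-curve} instead shows $\Rmax^\star(q)\le\Rmin(q-2\eps-\tfrac1m)+(2\eps+\tfrac1m)\vmax$ and then uses that $\Rmin^\star$ is \emph{non-decreasing} (a consequence of the reserve) to absorb the quantile shift. That monotonicity of the ironed-and-reserved curve is the mechanism that converts the horizontal DKW error into a clean vertical $(2\eps+\tfrac1m)\vmax$ bound, and it does not follow from an $L^1$-type estimate on the un-ironed curves.
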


\subsection{The Empirical Myerson Auction}

We run the Empirical Myerson auction
(Algorithm~\ref{alg:em}), which we divided this into two parts: the first
is an algorithm $ALG$ (Algorithm~\ref{alg:ca}) that computes the
ironing intervals $\I$ and reserve price $r$ based on samples $X$ and
confidence parameter $\delta$.
The second step is to run the welfare-maximizing auction
subject to ironing and reservation. 
In this section we focus on analyzing the single-item auction, but the only place this is used is in line 2 of Algorithm~\ref{alg:em}. Auctions for position auctions and matroid environments are defined completely analogously.

\begin{algorithm}[t]
	\begin{codebox}
		\Procname{$\proc{ComputeAuction}(X, \delta)$}
		\li Construct $\qest$ from $X$; let $\epsilon=\sqrt{\frac{\ln 2 \cdot \delta^{-1}}{2|X|}}$.
		\li Construct $\wh{R}_\text{min}(q) = q\cdot\qest(1-q-\epsilon)$.
		\li Compute the convex hull $\CH(\Rmin)$, of $\Rmin$.
		\li Let $\I_q$ be the set of intervals where $\Rmin$ and $\CH(\Rmin)$ differ.
		\li \For \kw{each} quantile ironing interval $(a_i, b_i) \in \I_q$ \Do
		\li Add $[\qest(1-b_i-\epsilon),\ \qest(1-a_i-\epsilon))$ to $\I$.
		\End
		\li Let the reserve quantile be $r_q = \argmax_q \Rmin(q)$.
		\li Let the reserve price be $r = \qest(1-r_q-\epsilon)$.
		\li \Return $(\I,r)$
	\end{codebox}
	\caption{Compute the ironing intervals $\I$ and reserve price $r$.}
	\label{alg:ca}	
\end{algorithm}

\begin{algorithm}[t]
	\begin{codebox}
		\Procname{$\proc{EmpiricalMyerson}(X, \delta,{\bf b})$}
		\li $\I, r \leftarrow \proc{ComputeAuction}(X,\delta)$
		\li \Return $A_{(1)}^{\I,r}({\bf b})$
	\end{codebox}
	\caption{Empirical Myerson.}
	\label{alg:em}	
\end{algorithm}


The Empirical Myerson auction takes an estimator for the quantile
function $\qest$ and constructs its revenue curve.\footnote{There are
  variations that differ from this approach, most notably the Random
  Sampling Empirical Myerson (RSEM) \cite{Devanur14} which constructs
  the revenue curve from the bidders in the auction, rather than from
  past data. This leads to constant-factor approximation bounds for
  prior-free auctions.} From this,
the convex hull $\CH(R)$ is computed and wherever $\CH(R)$ and $R$
disagree, an ironing interval is placed. Then, the highest point on
$R$ is used to obtain the reserve price quantile $q_r = \argmax_q
R(q)$. Note that this is all done in quantile space, but we need to
specify the reserve price and ironing intervals in value space. So the
last step is to use the empirical CDF $\Fest$ to obtain the values at
which to place the reserve price and ironing intervals. 

Algorithm~\ref{alg:em} follows that approach, with the exception that
in line 2 of \proc{ComputeAuction}, we take the empirical quantile
function to be $\qest(1-q-\epsilon)$ rather than the arguably more
natural choice of $\qest(1-q)$. 
By the DKW inequality, $\qest(1-q-\epsilon)\leq F^{-1}(1-q)$ with
probability $1-\delta$ (we prove this in Lemma~\ref{lem:bounds-on-R}).
The motivation here is to protect against overestimation
--- recall the cautionary tale of
Example~\ref{ex:matter}. That this approach indeed
leads to good revenue guarantees is shown in this section. 

\subsection{Additive Revenue Loss in Terms of Revenue Curves}

We start with a technical lemma that reduces 
bounding the loss in revenue to
bounding the estimation error due to using samples as opposed to the true
distribution $F$. 

\begin{lemma}
	\label{lem:tech-lemma}
	For a distribution $F$, let $\Ralg$ be the revenue curve induced by an algorithm $ALG\in \A$ and let $\Ropt$ be the optimal induced revenue curve. The additive revenue loss of $ALG$ with respect to $OPT$ is at most:
	\begin{align*}
		n\cdot \max_{q\in[0,1]} \left(\Ropt(q) - \Ralg(q)\right).
	\end{align*}
\end{lemma}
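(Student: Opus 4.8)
The plan is to write the expected revenue of any auction $A^{(\I,r)}\in\A$ as an integral of its induced revenue curve against the \emph{unironed} interim allocation rules, and then compare $ALG$ and $OPT$ pointwise in quantile space.

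First I would invoke Myerson's payment identity in the form already quoted in the excerpt: for any truthful auction in our (matroid or position) environment with interim allocation rules $\{\tilde y_i\}_{i=1}^n$, the expected revenue is $\sum_{i=1}^n \E_{q\sim U[0,1]}[-R(q)\,\tilde y_i'(q)]$. Applying this to $A^{alg}=A^{(\I_{alg},r_{alg})}$ and $A^{opt}=A^{(\I_{opt},r_{opt})}$, whose interim allocation rules are $y_i^{(\I_{alg},r_{alg})}$ and $y_i^{(\I_{opt},r_{opt})}$, and letting $y_i$ denote the interim allocation of bidder $i$ in the welfare-maximizing auction with no ironing and no reserve, the Switching Trick (Proposition~\ref{prop:switching-trick}) turns each revenue expression into one against the \emph{same} $y_i$: $\mathrm{Rev}(ALG)=\sum_i \E_{q\sim U[0,1]}[-\Ralg(q)\,y_i'(q)]$ and $\mathrm{Rev}(OPT)=\sum_i \E_{q\sim U[0,1]}[-\Ropt(q)\,y_i'(q)]$, where $\Ralg=R^{(\I_{alg},r_{alg})}$ and $\Ropt=R^{(\I_{opt},r_{opt})}$. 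It is essential here that $y_i$ is the same for $ALG$ and $OPT$, since it is determined only by the feasibility structure of the environment, not by the ironing/reserve choices.

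Subtracting, the additive revenue loss equals $\sum_{i=1}^n \int_0^1 \bigl(\Ropt(q)-\Ralg(q)\bigr)\bigl(-y_i'(q)\bigr)\,dq$. The interim allocation $y_i$ is nonincreasing in $q$ (a larger quantile means a smaller value, hence a smaller probability of winning), so the weight $-y_i'(q)$ is nonnegative; bounding $\Ropt(q)-\Ralg(q)\le \max_{q'\in[0,1]}\bigl(\Ropt(q')-\Ralg(q')\bigr)$ pointwise and integrating against this nonnegative weight gives the upper bound $\max_{q'\in[0,1]}\bigl(\Ropt(q')-\Ralg(q')\bigr)\cdot\sum_{i=1}^n\bigl(y_i(0)-y_i(1)\bigr)$. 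Finally each $y_i(q)$ is a probability in $[0,1]$, so $y_i(0)-y_i(1)\le 1$ and the sum is at most $n$, which is exactly the claimed bound.

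The only delicate point is regularity: $y_i$ and the induced revenue curves $\Ralg,\Ropt$ need not be everywhere differentiable, but $y_i$ is monotone (so $-y_i'$ is well defined as a nonnegative measure of total mass $y_i(0)-y_i(1)$) and $\Ralg,\Ropt$ are continuous on the compact set $[0,1]$ (so the maximum is attained); these facts are all that the ``bound pointwise, then integrate'' step requires, and I do not expect a genuine obstacle beyond staying consistent with the differential notation used elsewhere in the paper. The one conceptual step worth emphasizing in the write-up is the shared unironed $y_i$: it is what lets the Switching Trick route the entire comparison through the revenue curves, so that the factor $n$ arises purely from summing $\int_0^1(-y_i'(q))\,dq\le 1$ over the $n$ bidders.
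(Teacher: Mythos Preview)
Your proposal is correct and follows essentially the same route as the paper: apply the Switching Trick to both $ALG$ and $OPT$ so that the comparison is against the common unironed interim allocation $y_i$, then bound $\Ropt-\Ralg$ pointwise and integrate the nonnegative weight $-y_i'$. The only minor difference is that the paper (working in the single-item case) uses $y(0)-y(1)=1-0=1$ exactly, whereas you use the weaker $y_i(0)-y_i(1)\le 1$; your version is the right one for the general matroid/position setting and is what the paper effectively uses later in Section~\ref{sec:environments}.
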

\begin{proof}
	First, to calculate the ex ante expected revenue of a bidder $i$ with revenue
	curve $R$, ironing intervals $\I$ and reserve price $r$, we have by
	Myerson \cite{Myerson81}: 
	\begin{align}
		Rev[R,\I,r] &= -\E_{q\sim U[0,1]}[R(q)\cdot (y^{(\I,r)})'(q)].\label{eq:rev}
	\end{align}
	Next, we apply the Switching Trick of
	Proposition~\ref{prop:switching-trick}. Let $\I_{alg}, \I_{opt}$ be
	the sets of ironing intervals and $r_{alg},r_{opt}$ be the reserve
	prices of $ALG$ and $OPT$ respectively. This yields total revenues: 
	\begin{align*}
		\frac{Rev[F,I_{alg},r_{alg}]}{n} &= -\int_0^1 \Ralg(q)y'(q) dq,&
		\frac{Rev[F,I_{opt},r_{opt}]}{n} &= -\int_0^1 \Ropt(q)y'(q) dq.
	\end{align*}	
	Note that the interim allocation function $y$ in both cases is the
	same one; the only difference between $y^{(\I_{opt},r_{opt})}$ and
	$y^{(\I_{alg},r_{alg})}$ was the ironing intervals and reserve price,
	so after applying the switching trick, $y$ is simply the truthful
	welfare-maximizing interim allocation rule (for single-item auctions
	this is the probability for bidder $i$ to have the largest value $v_i
	= \max_j v_j$). This is the key point in our analysis, and it allows
	us to compare the expected revenue of both auctions directly: 
	\begin{align*}
			Rev[F,\I_{opt},r_{opt}] - Rev[F,\I_{alg},r_{alg}] &= n\left( -\int_0^1 \Ropt(q)y'(q) dq +\int_0^1 \Ralg(q)y'(q) dq\right)\\
		&= n\int_0^1 \left(\Ropt(q) - \Ralg(q) \right)\left(-y'(q)\right) dq\\
		&\leq n\cdot\max_{q\in[0,1]}\left(\Ropt(q) - \Ralg(q) \right) \cdot  \int_0^1 -y'(q) dq\\
		&= n\cdot \max_{q\in[0,1]}\left(\Ropt(q) - \Ralg(q) \right) \cdot \left(-y(1) + y(0)\right)\\
		&= n\cdot\max_{q\in[0,1]}\left(\Ropt(q) - \Ralg(q) \right).
	\end{align*}
	The inequality holds as $-y'$ is non-negative. The last equality holds
	as the probability of winning when your quantile is $0$ is $y(0)=1$, and the
	probability of winning when your quantile is $1$ is
	$y(1)=0$. Rearranging the terms yields the claim. 
\end{proof}

This is significant progress:
the additive loss in revenue can be bounded in terms of the
induced revenue curves of an algorithm ALG and the optimal algorithm,
two objects that we have some hope of getting a handle on.
Of course, we still need to show that the ironed revenue curve of
Algorithm~\ref{alg:em} is pointwise close to the ironed revenue curve
induced by the optimal auction (Section~\ref{sec:rev-error}).

\subsection{Bounding the Error in the Revenue Curve}
\label{sec:rev-error}

We implement the following steps to prove that
the error in the learning algorithm's estimation of the revenue curve
is small. 

\begin{itemize}
	\item (Lemma~\ref{lem:bounds-on-R}) We show that we can sandwich the actual revenue curve (without ironing or reserve price) $R$ between two empirical revenue curves, $\Rmin$ and $\Rmax$ that are defined using the empirical quantile function.
	\item (Lemma~\ref{lem:opt-ub} and Lemma~\ref{lem:alg-lb}) Let $\Rmax^\star$ (resp. $\Rmin^\star$) be the optimally induced revenue curve for $\Rmax$ (resp. $\Rmin$). The revenue curve induced by Algorithm~\ref{alg:em}, $\Ralg$, is pointwise higher than the optimal induced revenue curve of the lower bound $\Rmin^\star$, and the optimal induced revenue curve for the upper bound, $\Rmax^\star$, is pointwise higher than $\Ropt$.
	\item (Lemma~\ref{lem:loss-rev-curve}) Finally, we show that $\Rmax^\star(q) - \Rmin^\star(q)$ is small for all $q$, and therefore the additive loss is small.
\end{itemize}


\begin{figure}
	\centering
	\includegraphics[width=.45\textwidth]{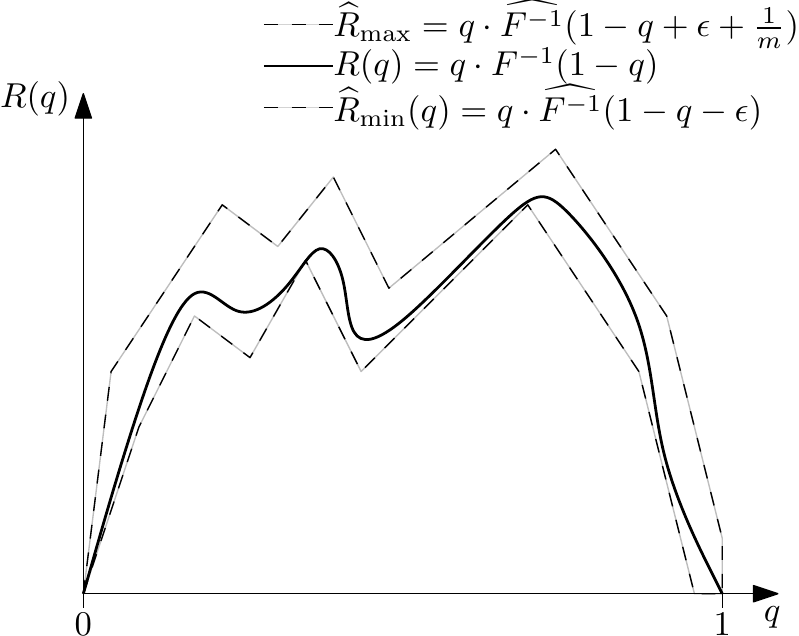}
	\caption{We can sandwich $R$ between two estimated revenue curves $\Rmin$ and $\Rmax$.}
	\label{fig:sub-rmax-r-rmin}
\end{figure}

\begin{lemma}
	\label{lem:bounds-on-R}
	Let $\Rmin(q) = q\cdot\qest(1-q-\epsilon)$ and $\Rmax(q) = q \cdot \qest(1-q+\epsilon+\frac1m)$. With probability at least $1-\delta$ for all $q\in[0,1]$:
	\begin{align*}
	\Rmin(q)\quad \leq \quad R(q) \quad \leq \quad \Rmax(q).
	\end{align*}
\end{lemma}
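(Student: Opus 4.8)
The plan is to prove the two inequalities separately, each as a consequence of the DKW inequality holding (which it does with probability at least $1-\delta$). Condition on the event that $\sup_{v\in\R}|F(v)-\Festm(v)| \leq \eps$; everything below holds on this event.

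For the left inequality $\Rmin(q)\leq R(q)$: since $R(q) = q\cdot F^{-1}(1-q)$ and $\Rmin(q) = q\cdot\qest(1-q-\eps)$, and $q\geq 0$, it suffices to show $\qest(1-q-\eps)\leq F^{-1}(1-q)$ for every $q\in[0,1]$. The idea is to set $v = F^{-1}(1-q)$ and use the DKW bound to control the empirical CDF at $v$. On the DKW event, $\Festm(v) \geq F(v) - \eps \geq (1-q) - \eps$ (using $F(F^{-1}(1-q))\geq 1-q$, the standard property of the quantile function). Then I would invoke the left half of~\eqref{eq:F-1}, namely $\qest(\Festm(v))\leq v$, together with monotonicity of $\qest$ in its argument: since $\Festm(v)\geq 1-q-\eps$, we get $\qest(1-q-\eps)\leq\qest(\Festm(v))\leq v = F^{-1}(1-q)$. (Edge cases where $1-q-\eps<0$ are handled by the convention $\qest(x)=0$ for $x<0$, and $0\leq F^{-1}(1-q)$ since valuations are nonnegative.)

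For the right inequality $R(q)\leq\Rmax(q)$: symmetrically, it suffices to show $F^{-1}(1-q)\leq\qest(1-q+\eps+\frac1m)$. Again set $v=F^{-1}(1-q)$. On the DKW event, $\Festm(v)\leq F(v)+\eps$. The property I need is $F(F^{-1}(1-q))\leq 1-q$ wherever the quantile function is "tight" — but in general $F$ can jump, so I should instead argue via the other half of~\eqref{eq:F-1}: $v \leq \qest(\Festm(v)+\frac1m)$. Combining with $\Festm(v)+\frac1m \leq F(v)+\eps+\frac1m$ and monotonicity of $\qest$ gives $v \leq \qest(F(v)+\eps+\frac1m)$. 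It then remains to relate $F(v)$ to $1-q$; here $v = F^{-1}(1-q)$ gives $F(v)\leq 1-q$ by right-continuity of $F$ combined with the definition of $F^{-1}$ as an upper quantile (for any $v' < F^{-1}(1-q)$ we have $F(v') < 1-q$, and $F(F^{-1}(1-q))\leq 1-q$ follows). Hence $F(v)+\eps+\frac1m \leq 1-q+\eps+\frac1m$ and, by monotonicity, $v\leq\qest(1-q+\eps+\frac1m)$, so $R(q)=q\cdot v \leq q\cdot\qest(1-q+\eps+\frac1m)=\Rmax(q)$. (The $q>1$ or $1-q+\eps+\frac1m>1$ edge cases are covered by the $\qest(x)=\vmax$ convention, and $F^{-1}(1-q)\leq\vmax$.)

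The main obstacle is the handling of point masses / the precise quantile-function conventions: I need to be careful about which of the two one-sided relations in~\eqref{eq:F-1} to invoke and about the exact inequalities $F(F^{-1}(1-q))\gtrless 1-q$, since $\qest$ and $F^{-1}$ are defined as particular (upper) generalized inverses. The extra $\frac1m$ slack in $\Rmax$ is exactly what absorbs the discretization of the empirical quantile function, so the argument should go through cleanly once the conventions are pinned down; the rest is just monotonicity of $x\mapsto\qest(x)$ and nonnegativity of $q$.
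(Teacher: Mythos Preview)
Your proposal is correct and follows essentially the same approach as the paper: condition on the DKW event, use monotonicity of $\qest$ together with the two halves of~\eqref{eq:F-1}, and multiply through by $q$. The paper parameterizes by $v$ and sets $q'=F(v)$ (then lets $q=1-q'$), whereas you parameterize by $q$ and set $v=F^{-1}(1-q)$; these are dual and equally valid, with the paper dismissing the upper bound as ``analogous'' while you spell it out. One caution: the inequality $F(F^{-1}(1-q))\le 1-q$ that you invoke for the upper bound is \emph{false} under the standard lower-quantile convention (right-continuity gives $\ge$, not $\le$), which is exactly the convention issue you flag at the end; the paper's $v$-first parameterization sidesteps this, and your argument is easily repaired the same way.
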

\begin{proof}
	See Figure~\ref{fig:sub-rmax-r-rmin} for graphical intuition. We will start with the first inequality: $\Rmin(q)\leq R(q)$. By the DKW inequality with probability $1-\delta$ the following holds for all $v$:
	\begin{align}
		\label{eq:dkw}
		-\epsilon \leq \Fest(v) - F(v) \leq \epsilon
	\end{align}
	Define $q' = F(v)$. We can rewrite the first inequality of \eqref{eq:dkw}:
	\begin{align*}
		q' - \epsilon &\leq \Fest(v)\\
		\qest\left(q' - \epsilon\right) &\leq \qest\left( \Fest(v)\right)
	\end{align*}
	Since $\qest$ is mononotonically non-decreasing.
	We now invoke \eqref{eq:F-1}: 
	\begin{align*}
		\qest\left(q' - \epsilon\right) \leq \qest\left( \Fest(v)\right) \leq v = F^{-1}(q').
	\end{align*}
	Now let $q=1-q'$, and multiply both sides by $q$ to obtain: $\Rmin(q)\leq R(q)$.
	
	The proof for the upper bound of $R$ is analogous with the exception that we pick up another $\frac1m$ term to invoke \eqref{eq:F-1}. 
\end{proof}

So while the the algorithm does not know the exact revenue curve $R$, it
can be upper bounded by $\Rmax$ and lower bounded by $\Rmin$. We'll
use $\Rmin$ to give a lower bound on the revenue curve $\Ralg$ induced
by Algorithm~\ref{alg:em},
and $\Rmax$ to give an upper
bound on the revenue curve $\Ropt$ induced by Myerson's optimal
auction. We start with the latter. 

\begin{lemma}
	\label{lem:opt-ub}
	Let $\Ropt$ be the optimal induced revenue curve of $R$, and let $\Rmax^\star$ be the optimal induced revenue curve for $\Rmax$. Then with probability $1-\delta$ for all $q\in[0,1]$:
	\begin{align*}
	\Ropt(q)\quad \leq \quad \Rmax^\star(q).
	\end{align*}
\end{lemma}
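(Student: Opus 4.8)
The plan is to exhibit $\Rmax^\star$ as a concave, nondecreasing function that dominates the true revenue curve $R$ pointwise, and then to invoke that $\Ropt=R^\star$ is the pointwise-\emph{smallest} such function; the inequality $\Ropt\le\Rmax^\star$ then follows, on the $1-\delta$ event supplied by Lemma~\ref{lem:bounds-on-R}.

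First I would record a minimality characterization of the ``$\star$'' operator. For any revenue curve $\rho$ on $[0,1]$, recall (as in the notation section, and as for $\Rmax$ there) that $\rho^\star$ is the concave envelope $\CH(\rho)$ — the smallest concave function that is everywhere $\ge\rho$, equivalently the upper boundary of the convex hull of the graph of $\rho$ — truncated so as to be constant, at level $\max_q\CH(\rho)(q)$, to the right of its maximizer $q^\star$; for $\rho=R$ this is exactly Myerson's optimal curve $\Ropt$. Since $\CH(\rho)$ is concave with maximum at $q^\star$, it is nondecreasing on $[0,q^\star]$, so $\rho^\star$ is concave and nondecreasing on all of $[0,1]$, and clearly $\rho^\star\ge\CH(\rho)\ge\rho$ pointwise. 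I would then argue that $\rho^\star$ is the pointwise-smallest concave nondecreasing function dominating $\rho$: if $h$ is concave, nondecreasing, and $h\ge\rho$, then $h\ge\CH(\rho)$ by the defining property of the concave envelope, so $h(q)\ge\CH(\rho)(q)=\rho^\star(q)$ for $q\le q^\star$, while for $q>q^\star$ monotonicity of $h$ gives $h(q)\ge h(q^\star)\ge\CH(\rho)(q^\star)=\max_q\CH(\rho)(q)=\rho^\star(q)$; hence $h\ge\rho^\star$.

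With this in hand the lemma is immediate. By Lemma~\ref{lem:bounds-on-R}, with probability at least $1-\delta$ we have $R(q)\le\Rmax(q)$ for all $q\in[0,1]$; I would condition on this event. Then $\Rmax^\star$ is concave, nondecreasing, and satisfies $\Rmax^\star\ge\Rmax\ge R$ pointwise, so applying the characterization to $\rho=R$ yields $\Ropt(q)\le\Rmax^\star(q)$ for every $q$.

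I do not expect a genuine obstacle; the only points needing care are (i) confirming that $\CH(\cdot)$ denotes the concave \emph{upper} envelope in revenue-curve space and that Myerson ironing plus the reserve replaces $R$ by exactly this truncated envelope (standard, and already used implicitly in Algorithm~\ref{alg:ca}), and (ii) the behaviour of the $\star$ operator exactly at $q^\star$ and at the endpoints $q\in\{0,1\}$ — e.g.\ when $\CH(\rho)$ is flat near its peak, or when $\rho$ comes from a distribution with point masses — none of which affects the two inequalities above. As a fallback one can avoid the minimality characterization and argue directly: $\CH(R)\le\CH(\Rmax)$ since $\CH(\Rmax)$ is concave and dominates $R$; then compare the constant right-hand pieces using that $\CH(\Rmax)$ is nondecreasing up to its own maximizer and that $\max_q\CH(\Rmax)(q)\ge\max_q\CH(R)(q)$.
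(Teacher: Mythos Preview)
Your argument is correct and close in spirit to the paper's, but the packaging is a bit different. The paper proceeds in two explicit steps: first it uses $R\le\Rmax$ (Lemma~\ref{lem:bounds-on-R}) to conclude $\CH(R)\le\CH(\Rmax)$ pointwise, and then it separately argues that the reserve-price truncation preserves the inequality (essentially because $\Rmax^\star\ge\CH(\Rmax)$ and $\Rmax^\star$ is nondecreasing, so for $q$ beyond the reserve quantile of $R$ one has $\Ropt(q)=\max\CH(R)\le\Rmax^\star(q)$). Your main route collapses both steps into a single minimality characterization---$\rho^\star$ is the pointwise-smallest concave nondecreasing majorant of $\rho$---from which the lemma is a one-line consequence; your ``fallback'' is exactly the paper's two-step argument. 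The characterization is a clean and reusable formulation, and your verification that $\rho^\star$ is concave and nondecreasing and minimal among such majorants is straightforward. Either presentation is fine here; yours is slightly more conceptual, the paper's slightly more hands-on.
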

\begin{proof}
	By Lemma~\ref{lem:bounds-on-R} we know that for every $q$, $\Rmax(q) \geq R(q)$. First take the effect of optimally ironing both curves into account. Optimal ironing will lead to induced revenue curves $\CH(\Rmax)$ and $\CH(R)$ and since $\Rmax$ is pointwise higher, so is its convex hull.
	What remains to be proven is that this property is maintained after the effect of the reserve price of the curve. The reserve quantile is the highest point on the curve, and the induced revenue curve will stay constant for all higher quantiles. Let $q$ be the reserve quantile for $R$. If the reserve quantile for $\Rmax$ happens to be the same, then the statement is proven. Moreover, if the optimal reserve quantile for $\Rmax$ is not $q$, it can only be higher, in which case the statement is also proven. 
\end{proof}

So $\Rmax^\star$ is pointwise higher than $\Ropt$. Proving that $\Rmin^\star$ is a lower bound for $\Ralg$ is slightly more involved since the ironing intervals and reserve price are given by Algorithm~\ref{alg:em}, and may not be optimal. Therefore, the induced revenue curve $\Ralg$ is in general not concave, and the reserve quantile may not be at the highest point of the curve. However, since $\Ralg$ is induced by the reserve price and ironing intervals that were chosen by looking at $\Rmin$, there is enough structural information to prove the lower bound.

\begin{lemma}
	\label{lem:alg-lb}
	Let $\Ralg$ be the revenue curve induced by Algorithm~\ref{alg:em} and let $\Rmin^\star$ be the optimal induced revenue curve for $\Rmin$. Then with probability $1-\delta$ for all $q\in[0,1]$:
	\begin{align*}
	\Rmin^\star(q)\quad \leq \quad \Ralg(q).
	\end{align*}
\end{lemma}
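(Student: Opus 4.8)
The plan is to track the revenue curve $\Ralg$ induced by Algorithm~\ref{alg:em} piece by piece against the optimally induced curve $\Rmin^\star$ of $\Rmin$, using the fact that the ironing intervals $\I_q$ and reserve quantile $r_q$ were chosen by taking the convex hull of $\Rmin$ and its argmax. First I would fix a quantile $q\in[0,1]$ and recall how $\Ralg$ and $\Rmin^\star$ are each defined by cases in~\eqref{eq:induced-R}: there is a ``reserve'' region above the reserve quantile, a collection of linearly-interpolated ironing intervals, and the ``otherwise'' region where the curve equals the underlying revenue curve ($\widehat R$ for $\Ralg$, $\Rmin$ for $\Rmin^\star$). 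The key quantitative input is that at the quantiles $1-a_i-\eps$ and $1-b_i-\eps$ (the images under $\qest(1-\cdot-\eps)$ of the endpoints of the value-space intervals in $\I$), we have $\Ralg(q)=q\cdot\qest(1-q-\eps)=\Rmin(q)$ exactly, because step 6 of \proc{ComputeAuction} places the value-space interval endpoints at precisely $\qest(1-b_i-\eps)$ and $\qest(1-a_i-\eps)$. So the ironing breakpoints of $\Ralg$, measured in quantile space, coincide with those of $\Rmin^\star$, and at those breakpoints the two curves agree.

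Given that alignment, I would argue the pointwise inequality region by region. In the ``otherwise'' region, $\Ralg(q)=\widehat R(q)=q\cdot\qest(1-q)\geq q\cdot\qest(1-q-\eps)=\Rmin(q)\geq\Rmin^\star(q)$, using monotonicity of $\qest$ for the first inequality and the fact that the convex hull plus reserve operation only lowers the curve (so $\Rmin^\star\leq\Rmin$ pointwise) for the second. On an ironing interval $(q_a,q_b]$ of $\Ralg$, both $\Ralg$ and $\Rmin^\star$ are linear in $q$; $\Ralg$ linearly interpolates the values of $\widehat R$ at the endpoints, while $\Rmin^\star$ interpolates the values of (the convex hull of) $\Rmin$ at those same endpoints. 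Since $\widehat R\geq\Rmin$ at the endpoints and $\Rmin\geq\Rmin^\star=\CH(\Rmin)$ at those endpoints (the convex hull touches $\Rmin$ at interval endpoints), linearity gives $\Ralg(q)\geq\Rmin^\star(q)$ throughout the interval. Finally, in the reserve region $q>F(r)$: the reserve quantile for $\Ralg$ is $r_q=\argmax_q\Rmin(q)$, so $\Ralg$ is constant there at the value $\widehat R$ takes at the point corresponding to $r_q$, which is $\geq\Rmin(r_q)=\max_q\Rmin(q)\geq\Rmin^\star(q)$ for every $q$; hence the reserve region of $\Ralg$ also dominates $\Rmin^\star$ pointwise.

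The main obstacle I anticipate is bookkeeping the correspondence between value space and quantile space cleanly — in particular verifying that when we map the $\I_q$ intervals back to value space via $\qest(1-\cdot-\eps)$ in step 6 and then run auction $A_{(1)}^{\I,r}$, the induced curve $\Ralg$ really does have its linear pieces over exactly the quantile intervals $(a_i+\text{shift},b_i+\text{shift}]$ that line up with $\Rmin^\star$'s ironing pieces, rather than some slightly shifted intervals (off by the $1/m$ granularity of $\qest$, or by the $\eps$ shift). One has to check that the endpoint-agreement $\Ralg=\Rmin$ at breakpoints is exact, or at least that any slack goes in the favorable direction; the cautionary Example~\ref{ex:matter} shows this is where an overestimate would be fatal, and the deliberate use of $\qest(1-q-\eps)$ rather than $\qest(1-q)$ in step 2 is precisely what makes the inequalities point the right way. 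Once that alignment is pinned down, the three-region case analysis above is routine, each case reducing to monotonicity of $\qest$ and the defining property that a convex hull lies below the curve and touches it at the interval endpoints.
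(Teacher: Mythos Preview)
Your proposal has two substantive errors that break the argument.

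First, and most importantly, you misidentify what $\Ralg$ is. You write ``$\Ralg(q)=\widehat R(q)=q\cdot\qest(1-q)$'' in the ``otherwise'' region, but $\Ralg$ is the curve induced by applying the algorithm's value-space ironing intervals and reserve price to the \emph{true} revenue curve $R(q)=q\cdot F^{-1}(1-q)$, not to the empirical curve. (This is how the Switching Trick is used in Lemma~\ref{lem:tech-lemma}.) Consequently, the quantile-space breakpoints of $\Ralg$ are determined by $F$, not by $\qest$: if the value-space interval is $[\qest(1-b_i-\eps),\qest(1-a_i-\eps))$, the corresponding quantile interval on $\Ralg$ is $[1-F(\qest(1-a_i-\eps)),\,1-F(\qest(1-b_i-\eps)))$, which has no reason to equal $(a_i,b_i]$ or any fixed shift thereof. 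The paper says this explicitly: ``$\Ralg$ is not necessarily ironed on the same intervals in quantile space as $\Rmin^\star$.'' Your region-by-region comparison therefore compares linear pieces over non-aligned intervals, and the case split collapses.

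Second, your inequality $\Rmin^\star\leq\Rmin$ is backwards: the ironed (concave-hull-plus-reserve) curve lies \emph{above} the unironed curve, i.e.\ $\Rmin^\star\geq\Rmin$ pointwise. So even granting breakpoint alignment, the chain $R(q)\geq\Rmin(q)\geq\Rmin^\star(q)$ in the ``otherwise'' region is invalid at the last step.

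The paper sidesteps breakpoint alignment entirely with a geometric argument: for any revenue curve, a value-space endpoint $v$ corresponds to the ray through the origin of slope $v$, since $R(q)/q$ equals the value at that quantile. Ironing a value interval therefore replaces a portion of the curve by a chord whose endpoints lie on two fixed rays through the origin (the same rays for both $R$ and $\Rmin$). Because $\Rmin\leq R$ pointwise (Lemma~\ref{lem:bounds-on-R}), along any such ray the intersection with $\Rmin$ lies no further from the origin than the intersection with $R$; hence the chord on $R$ dominates the chord on $\Rmin$, and the pointwise inequality survives ironing and the reserve. This is the missing idea your last paragraph gestures at but does not supply.
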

\begin{proof}
	Again we first argue that after ironing the induced curve $\Rmin^\star$ is completely below $\Ralg$, and subsequently show that this is still true after setting the reserve price.
	
	Algorithm~\ref{alg:em} picks the ironing intervals based on where $\Rmin^\star$ differs from $\CH(\Rmin)$. However, $\Rmin$ and $\Ralg$ are defined in quantile space, whereas the ironing intervals are defined in value space. This means that $\Ralg$ is not necessarily ironed on the same intervals in quantile space as $\Rmin^\star$.
	The line that goes through the origin and intersects $\Rmin$ at $q$, will intersect $R$ at the point $q'$ where the $F^{-1}(q') = \qest(1-q-\epsilon)$, see Figure~\ref{fig:chrmin}.\footnote{It's important to note that we do not need to know $F^{-1}$, to predict the effect of ironing the actual revenue curve.} Any line through the origin intersects $\Rmin$ before it intersects $R$, and since ironing is done according to these lines through the origin, this property is maintained after ironing. 
	The reserve price is determined in the same way, and hence $\Ralg(q) \geq \Rmin^\star(q)$ for all $q\in[0,1]$ as claimed.
\end{proof}

\begin{figure}
	\centering
	\includegraphics[scale=.9]{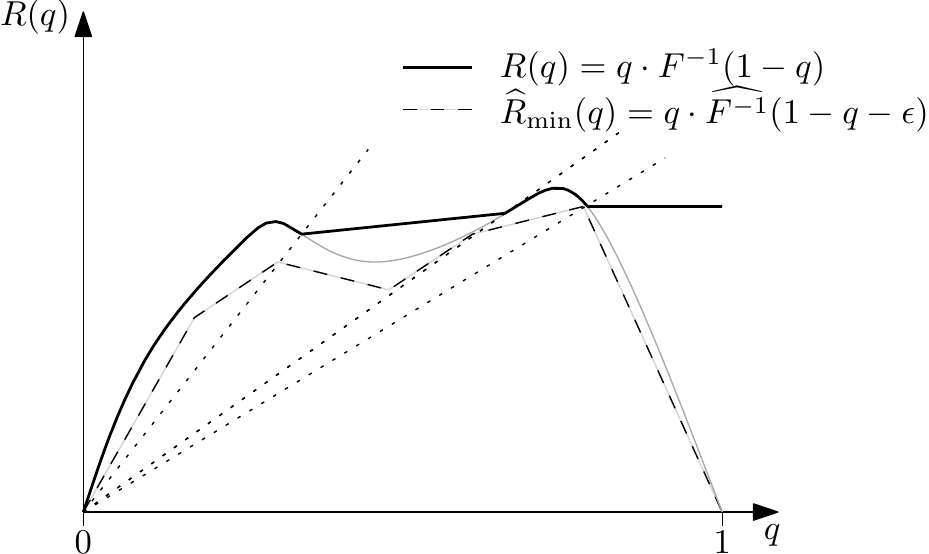}
	\caption{When we pick a value range to iron based on $\Rmin$, its effect on the actual revenue curve can be seen. The quantiles of the start and end point of the ironing procedure are given by the line that intersects the start and end point on $\Rmin$ and the origin.}
	\label{fig:chrmin}
\end{figure}

So we have our upper bound and lower bounds in terms of $\qest$. Finally we show that the difference between the two is small.

\begin{lemma}
	\label{lem:loss-rev-curve}
	For a distribution $F$, let $\Ralg$ be the revenue curve induced by Algorithm~\ref{alg:em}, $\Ropt$ be the optimal induced revenue curve. Using $m$ samples from $F$, with probability $1-\delta$ and $\epsilon = \sqrt{\frac{\ln 2\cdot \delta^{-1}}{2m}}$ for all $q$: 
	\begin{align*}
	\Ropt(q) - \Ralg(q) \leq  \Rmax^\star(q) - \Rmin^\star(q) \leq \left(2\epsilon+\frac1m\right)\vmax \leq 3\epsilon\cdot \vmax.
	\end{align*}
\end{lemma}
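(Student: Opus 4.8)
The plan is to prove the chain of three inequalities in turn, with almost all the substance concentrated in the middle one. The first inequality, $\Ropt(q) - \Ralg(q) \leq \Rmax^\star(q) - \Rmin^\star(q)$, is immediate from the two preceding lemmas: Lemma~\ref{lem:opt-ub} gives $\Ropt(q) \leq \Rmax^\star(q)$ and Lemma~\ref{lem:alg-lb} gives $\Ralg(q) \geq \Rmin^\star(q)$, so subtracting yields the bound (and both hold simultaneously with probability $1-\delta$ since they are consequences of the single DKW event). The last inequality, $(2\epsilon + \frac1m)\vmax \leq 3\epsilon\vmax$, just needs $\frac1m \leq \epsilon$, which follows because $\epsilon = \sqrt{\ln(2\delta^{-1})/(2m)} \geq \sqrt{1/(2m)} \geq 1/m$ for $m \geq 1$ (using $\ln 2 > 1/2$); I would note this in one line.

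The real work is bounding $\Rmax^\star(q) - \Rmin^\star(q) \leq (2\epsilon + \frac1m)\vmax$ pointwise. The approach I would take is to \emph{not} fight with the ironing/reserve operations directly, but instead to first control the gap between the unironed curves $\Rmax$ and $\Rmin$, and then argue that the ``optimally iron and reserve'' operation $R \mapsto R^\star$ is non-expansive in the sup norm (or at least does not increase the pointwise gap between two ordered curves). For the first part: $\Rmax(q) - \Rmin(q) = q\bigl(\qest(1-q+\epsilon+\frac1m) - \qest(1-q-\epsilon)\bigr)$. The quantile function $\qest$ takes values in $[0,\vmax]$, so this difference is at most $q\cdot\vmax \leq \vmax$ — but that is too weak; I need the $(2\epsilon+\frac1m)$ factor. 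The right way is to observe that the two arguments of $\qest$ differ by exactly $2\epsilon + \frac1m$ in quantile, and then use a ``mass-shifting'' bound: since $\qest$ is a step function realizing the empirical distribution on $[0,\vmax]$, moving a quantile window of width $w = 2\epsilon+\frac1m$ changes the value by at most... well, pointwise this can still be $\vmax$ if there is a point mass. So the pointwise statement $\Rmax(q) - \Rmin(q) \leq (2\epsilon+\frac1m)\vmax$ must actually be leveraging the factor of $q$ together with a more clever accounting — or the lemma is genuinely about the \emph{ironed} curves, where the convex-hull operation spreads out the damage.

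This is where I expect the main obstacle to be, and I think the resolution is the following: the convex hull flattens out exactly the point-mass regions. Concretely, I would argue that $\Rmax^\star$ and $\Rmin^\star$ are both concave and $\Rmin^\star$ dominates the convex hull of a curve that is pointwise at most $\frac1q(2\epsilon+\frac1m)\vmax \cdot q$ below $\Rmax$... Let me instead take the cleaner route: bound $\CH(\Rmax)(q) - \CH(\Rmin)(q)$ by exhibiting, for each $q$, an explicit feasible point. Since $\CH(\Rmin)(q)$ is a convex combination $\sum \lambda_i \Rmin(q_i)$ with $\sum\lambda_i q_i = q$, and $\Rmin(q_i) = q_i \qest(1-q_i-\epsilon) \geq q_i\bigl(\qest(1-q_i+\epsilon+\frac1m) - \text{(something)}\bigr)$, I would relate each $\Rmin(q_i)$ to $\Rmax$ evaluated at a shifted quantile $q_i' $ with $q_i' - q_i \leq 2\epsilon+\frac1m$ and $\Rmax(q_i') \le \Rmin(q_i) + (2\epsilon+\frac1m)\vmax$ — using that pushing the quantile argument of $\Rmax$ down by the window width recovers $\Rmin$'s value up to the endpoint contribution $\le (2\epsilon+\frac1m)\cdot \qest(\cdot) \le (2\epsilon+\frac1m)\vmax$. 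Taking the corresponding convex combination on the $\Rmax$ side (which lies below $\CH(\Rmax)$ at the shifted barycenter $q + (2\epsilon+\frac1m) \ge q$, hence $\le \CH(\Rmax)(q)$ after the constant-tail extension handles the overshoot past the peak) closes the gap. Finally, the reserve-price (constant-tail) step only truncates both curves at their respective maxima and keeps them constant thereafter, which can only shrink a gap between two already-ordered concave curves, so the $\star$-versions inherit the bound. I would present Lemma~\ref{lem:loss-rev-curve}'s proof in exactly this order: (1) first inequality from Lemmas~\ref{lem:opt-ub}–\ref{lem:alg-lb}; (2) the quantile-shift estimate $\Rmax(q) - \Rmin(q') \le (2\epsilon+\frac1m)\vmax$ type bound and its propagation through convex hull and reserve; (3) the trivial $\frac1m \le \epsilon$ simplification.
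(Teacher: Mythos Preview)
Your handling of the first and third inequalities is fine, and you have in fact isolated the right core identity: with $w = 2\epsilon + \tfrac1m$,
\[
\Rmax(q) \;=\; q\cdot \qest\!\left(1-q+\epsilon+\tfrac1m\right)
\;=\; \Rmin(q-w) + w\cdot \qest\!\left(1-q+\epsilon+\tfrac1m\right)
\;\le\; \Rmin(q-w) + w\,\vmax,
\]
because the two curves evaluate $\qest$ at the \emph{same} argument once you shift $q$ by $w$. This is exactly the estimate the paper's proof rests on.

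The gap is in how you propagate it through the convex hull. You start from $\CH(\Rmin)(q) = \sum_i \lambda_i \Rmin(q_i)$, shift each $q_i$ forward to $q_i' = q_i + w$, and obtain $\sum_i \lambda_i \Rmax(q_i') \le \CH(\Rmin)(q) + w\,\vmax$. But the left-hand side is merely \emph{some} convex combination of $\Rmax$-values with barycenter $q+w$, so all it tells you is that this quantity lies \emph{below} $\CH(\Rmax)(q+w)$. Two upper bounds on the same number do not compare to each other; you have not bounded $\CH(\Rmax)$ from above anywhere, and the phrase ``hence $\le \CH(\Rmax)(q)$ after the constant-tail extension'' does not rescue this --- the inequality is pointing the wrong way.

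The fix is simply to run your own argument from the other side. From $\Rmax(q) \le \Rmin(q-w) + w\,\vmax$ pointwise, taking upper convex hulls (which is monotone and commutes with a shift-plus-constant) gives $\CH(\Rmax)(q) \le \CH(\Rmin)(q-w) + w\,\vmax$. Applying the reserve step (replace each concave hull by its running maximum) then yields $\Rmax^\star(q) \le \Rmin^\star(q-w) + w\,\vmax \le \Rmin^\star(q) + w\,\vmax$, the last step using that $\Rmin^\star$ is non-decreasing. The paper carries out the equivalent computation by an explicit three-case analysis (point not ironed, point past the reserve quantile, point inside an ironing interval of $\Rmax^\star$), transporting the ironing/reserve structure of $\Rmax^\star$ onto $\Rmin$ shifted by $w$ and then invoking optimality of $\Rmin^\star$; but the substance is the same shift identity you found, applied in the correct direction.
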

\begin{proof}
	Let $\I_\text{max}$ and $r_\text{max}$ be the set of ironing intervals and reserve price in quantile space of $\Rmax^\star$. Comparing $\Rmax^\star$ and $\Rmin^\star$ directly is difficult since the ironing intervals and reserve price may be quite different between the two. Instead, we will take the set $\I_\text{max}$ of ironing intervals $[a_i,b_i)$, and reserve quantile $r_\text{max}$ from $\Rmax^\star$ and use this to iron $\Rmin(q)$ at intervals $[a_i-2\epsilon - \frac1m,b_i- 2\epsilon - \frac1m)$, and set a reserve price of $r_\text{max}-2\epsilon-\frac1m$.
	
	If we compare $\Rmax^\star$ against this induced curve $\Rmin^{(\I_\text{max},r_\text{max})}$, we have an upper bound for the difference with respect to $\Rmin^\star$, since the latter is the optimal induced revenue curve and therefore pointwise higher. We can reason about ironing in quantile space without loss of generality since both $\Rmax$ and $\Rmin$ use the same function $\qest$ (alternatively think of it as ironing in value space based on $\Rmax$, the line through the origin and $\Rmax(q)$ intersects $\Rmin$ at $q-2\epsilon-\frac1m$).
	
	There are 3 cases to handle: 1) if $q$ falls in an ironing interval $[a_i,b_i)\in\I_\text{max}$, 2) if the quantile $q$ is higher than the reserve quantile $q\geq r_\text{max}$ of $\Rmax$, and 3) when neither of those cases apply. We start with the case for a $q$ where $\Rmax^\star(q) = \Rmax$, i.e. $q$ does not fall in an ironing interval and is smaller than the reserve quantile:
	
	\begin{align*}
		\Rmax^\star(q) &= q\cdot \qest\left(1-q+\epsilon + \frac1m\right) \\
		&\leq \left(q-2\epsilon-\frac1m\right)\cdot \qest\left(1-q+\epsilon + \frac1m\right) + \left(2\epsilon + \frac1m\right)\vmax\\
		&= \Rmin\left(q-2\epsilon-\frac1m\right) + \left(2\epsilon + \frac1m\right)\vmax\\
		&= \Rmin^\star\left(q-2\epsilon-\frac1m\right) + \left(2\epsilon + \frac1m\right)\vmax\\
		&\leq \Rmin^\star(q) + \left(2\epsilon + \frac1m\right)\vmax
	\end{align*}
	where the first inequality holds because $\qest(q)\leq \vmax$ for all $q$ and the last inequality follows because $\Rmin^\star$ is monotonically non-decreasing. Rearranging yields the claim for $q$ not in an ironing interval or reserved.
	
	The other cases follow similarly: consider the case where $q\geq r_\text{max}$.
	\begin{align*}
		\Rmax^\star(q) &= \Rmax\left(r_\text{max}\right)\\
		&= r_\text{max} \cdot \qest\left(1-r_\text{max} +\epsilon + \frac1m\right)\\
		&\leq \left(r_\text{max}-2\epsilon - \frac1m\right) \cdot \qest\left(1-r_\text{max} +\epsilon + \frac1m\right) + \left(2\epsilon + \frac1m\right)\vmax\\
		&= \Rmin\left(r_\text{max} - 2\epsilon - \frac1m\right) + \left(2\epsilon + \frac1m\right)\vmax\\
		&\leq \Rmin^\star\left(r_\text{max} - 2\epsilon - \frac1m\right) + \left(2\epsilon + \frac1m\right)\vmax\\
		&\leq \Rmin^\star\left(q\right) + \left(2\epsilon + \frac1m\right)\vmax
	\end{align*}
	where the last inequality holds because $r_\text{max} - 2\epsilon - \frac1m \leq r_\text{max} \leq q$ and $\Rmin^\star$ is non-decreasing.
	
	Finally the case for when $q$ falls in an ironing interval is analogous: $\Rmax^\star(q)$ is a convex combination of the end points of the ironing interval: $\Rmax^\star(q) = \frac{q-a_i}{b_i-a_i}\cdot \Rmax(a_i) + \left(1-\frac{q-a_i}{b_i-a_i}\right)\Rmax(b_i)$ and hence $$\Rmin^\star(q) \geq \frac{q-a_i}{b_i-a_i}\cdot \Rmin\left(a_i-2\epsilon - \frac1m\right) + \left(1-\frac{q-a_i}{b_i-a_i}\right)\Rmin\left(b_i-2\epsilon-\frac1m\right) + \left(2\epsilon + \frac1m\right)\vmax.$$
	And so $\Rmax^\star(q)-\Rmin^\star(q) \leq \left(2\epsilon+\frac1m\right)\vmax$ in all cases.	
\end{proof}

Theorem~\ref{thm:main-theorem} now follows by combining Lemmas~\ref{lem:tech-lemma} and~\ref{lem:loss-rev-curve}.
The additive loss the in expected revenue of Algorithm~\ref{alg:em} is at
most $3\epsilon\cdot  n\cdot \vmax$.  

\vspace{.5\baselineskip}

\begin{prevproof}{Theorem}{thm:main-theorem}
	By Lemma~\ref{lem:tech-lemma} we can express the total
        additive error of the expected revenue of an algorithm that
        yields  ironing intervals $\I_{alg}$ and reserve price $r_{alg}$ with respect to the optimal auction as:
	\begin{align*}
		\frac{
			Rev[F,\I_{opt},r_{opt}] - Rev[F,\I_{alg},r_{alg}]}{n} \leq \max_{q\in[0,1]}\left(\Ropt(q) - \Ralg(q) \right).
	\end{align*}	
	By Lemma~\ref{lem:loss-rev-curve}, Algorithm~\ref{alg:em} yields
	\begin{align*}
		\max_{q\in[0,1]}\left(\Ropt(q) - \Ralg(q) \right)\leq \left(2\epsilon + \frac1m\right)\vmax.
	\end{align*}
	The theorem follows.
\end{prevproof}

When valuations lie in $[1, H]$, so
that the optimal expected revenue is bounded below, 
these results easily imply a sample complexity upper bound
$O\left(\epsilon^{-2}\ln\epsilon^{-1}n^2\vmax^2\right)$
for learning (efficiently) a $(1-\eps)$-(multiplicative)
approximate auction.

In Section~\ref{sec:environments} we show how to extend these results from single-item auctions to matroid environments and position environments.

\section{Matroid and Position Environments}
\label{sec:environments}

In Lemma~\ref{lem:tech-lemma} we showed that for environments with optimal auctions in $\A_{\I,r}$, the additive loss of Algorithm~\ref{alg:em} can be bounded in terms of the error in estimating the ironed revenue curve.

In this section we show that Myerson's optimal auction can be
expressed in this way for more general single-parameter environments
with i.i.d.\ bidders as well: namely in position auctions and auctions
with matroid constraints. The theorem statement is as follows; the
proof follows from Fact~\ref{fact:position} and
Fact~\ref{fact:matroid} that show that the optimal auctions for those
environments are in $\A$ (i.e., have the form $A^{(\I,r)}$ for a
suitable choice of ironed intervals~$\I$ and reserve price~$r$).

\begin{theorem}
	For position and matroid auctions with $n$ i.i.d.\ bidders with values from unknown distribution $F$, $m$ i.i.d.\ samples from $F$, with probability $1-\delta$, the additive loss in expected revenue of Algorithm~\ref{alg:em} compared to the optimal expected revenue is at most $3\cdot n\cdot \vmax\cdot \sqrt{\frac{\ln 2\delta^{-1}}{2m}}$.
\end{theorem}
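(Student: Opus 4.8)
The plan is to leverage the fact that almost all the work is already done. Inspecting the proof of Theorem~\ref{thm:main-theorem}, one sees that it used only two properties of the single-item environment: (i) the optimal auction has the form $A^{(\I,r)}$, which is what makes Lemma~\ref{lem:tech-lemma} applicable; and (ii) the validity of the Switching Trick, which Proposition~\ref{prop:switching-trick} already establishes for \emph{arbitrary} matroid and position environments. Lemmas~\ref{lem:bounds-on-R}, \ref{lem:opt-ub}, \ref{lem:alg-lb}, and \ref{lem:loss-rev-curve} are purely statements about revenue curves and are oblivious to the environment. So the only genuinely new content needed is the two structural facts announced before the theorem — that Myerson's optimal auction for a position environment (Fact~\ref{fact:position}) and for a matroid environment (Fact~\ref{fact:matroid}) lies in $\A$ — after which the chain Lemma~\ref{lem:tech-lemma} $\to$ Lemma~\ref{lem:loss-rev-curve} is replayed essentially verbatim.

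To establish the structural facts I would appeal to Myerson's characterization: in any single-parameter environment with i.i.d.\ bidders, the revenue-optimal truthful mechanism selects, among all feasible sets, the one maximizing the total \emph{ironed virtual value} $\vv^\star$ of its members (discarding bidders of negative ironed virtual value), where $\vv^\star$ is obtained from $\vv$ by replacing it with the slope of $\CH(R)$ on each ironing interval and zeroing it out above the reserve quantile. Since all bidders are i.i.d., $\vv^\star$ is one common non-increasing function; because $\vv^\star$ is monotone in the bid and is constant precisely on the value-space ironing intervals $\I^\star$ (this is exactly the interim modification recorded in~\eqref{eq:induced-y}), the optimal allocation rule coincides with: reject bids below $r^\star = \sup\{v : \vv^\star(1-F(v)) \le 0\}$, treat bids inside a common ironing interval as tied, maximize welfare among the survivors, and charge the payments forced by truthfulness --- i.e.\ it is exactly $A^{(\I^\star,r^\star)}$. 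The point that must be checked is that the welfare-maximization step (iii) of $A^{(\I,r)}$ depends only on the \emph{ordering} of the ironed bids (this is what makes $A^{(\I,r)}$ well defined): for matroid environments this is the classical fact that a maximum-weight independent set is found by the greedy algorithm, which reads only the order of the weights; for position auctions it follows from the assortative (sorted-bidders-to-sorted-slots) structure of the welfare-maximizing assignment, see~\cite{EOS07,V07}. Hence $A^{opt}\in\A$, giving Facts~\ref{fact:position} and~\ref{fact:matroid}.

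Given that the optimal auction is of the form $A^{(\I,r)}$, Lemma~\ref{lem:tech-lemma} applies; I would note that its proof goes through in these richer environments by summing over bidders: $Rev[F,\I,r] = -\sum_{i=1}^n \E_{q\sim U[0,1]}[R(q)(y_i^{(\I,r)})'(q)]$, the Switching Trick turns each summand into $-\int_0^1 R^{(\I,r)}(q)\,y_i'(q)\,dq$ with a \emph{common} induced revenue curve $R^{(\I,r)}$ but possibly bidder-dependent $y_i$, and since each $y_i$ is monotone non-increasing with $y_i(0)\le 1$ and $y_i(1)\ge 0$, the difference $Rev[F,\I_{opt},r_{opt}] - Rev[F,\I_{alg},r_{alg}]$ is at most $\max_{q}(R^{opt}(q)-R^{alg}(q))\cdot\sum_i(y_i(0)-y_i(1)) \le n\max_{q}(R^{opt}(q)-R^{alg}(q))$. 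Moreover $\proc{ComputeAuction}$ (Algorithm~\ref{alg:ca}) does not depend on the environment, so $\I_{alg}$, $r_{alg}$, and hence $R^{alg}$ are literally the same objects as in Section~\ref{sec:single-item}; thus Lemmas~\ref{lem:bounds-on-R}--\ref{lem:loss-rev-curve} give, with probability $1-\delta$, $\max_{q}(R^{opt}(q)-R^{alg}(q)) \le (2\epsilon+\tfrac1m)\vmax \le 3\epsilon\vmax$. Multiplying by $n$ and substituting $\epsilon=\sqrt{\ln(2\delta^{-1})/(2m)}$ yields the claimed bound $3n\vmax\sqrt{\ln(2\delta^{-1})/(2m)}$, the probability coming from the single use of the DKW inequality inside Lemma~\ref{lem:bounds-on-R}.

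The main obstacle is the verification of Facts~\ref{fact:position} and~\ref{fact:matroid}: one must be careful that ironing in value space really does induce the interim-allocation modification~\eqref{eq:induced-y} in quantile space (so that ``treat tied bids as tied'' genuinely reproduces Myerson's ironed allocation), and that the order-only property of step~(iii) holds in both environments. Both are classical but need to be spelled out; everything downstream is a routine replay of the single-item analysis, with the only bookkeeping change being the per-bidder sum above in place of the single interim curve $y$.
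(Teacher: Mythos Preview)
Your proposal is correct and follows essentially the same approach as the paper: the paper's own proof consists precisely of establishing Facts~\ref{fact:position} and~\ref{fact:matroid} (that the optimal auction lies in $\A$, via the assortative structure for positions and the greedy algorithm for matroids) and then replaying the chain Lemma~\ref{lem:tech-lemma} $\to$ Lemma~\ref{lem:loss-rev-curve}, with the only modification being the per-bidder sum $\sum_i (y_i(0)-y_i(1)) \le n$ that you also identify. If anything, you spell out more carefully than the paper why the per-bidder version of Lemma~\ref{lem:tech-lemma} still gives the factor $n$ (the paper elides this with ``$= ... =$'').
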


Our results stand in contrast to previous work on picking the optimal reserve price based on samples, for which a $\Omega(\frac{\log n}{\log \log n})$ lower bound is known in these settings \citep{Devanur14}. 

\subsection{Position Auctions}
A position auction \citep{V07} is one where the winners are given a position, and position $i$ comes with a certain quantity $x_i$ of the good. The canonical example is that of ad slot auctions for sponsored search, where the best slot has the highest click-through-rate, and subsequent slots have lower and lower click-through-rates. In an optimal auction, the bidder with the highest ironed virtual value gets the best slot, the second highest ironed virtual value the second slot, and so on.

\begin{fact}
	\label{fact:position}
	The optimal auction $A_{(pos)}^{opt}$ for position auctions can be expressed as an auction with ironing and reserve price in value space: $A_{(pos)}^{opt}\in \A$.
\end{fact}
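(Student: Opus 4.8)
The plan is to exhibit the ironing intervals $\I$ and reserve price $r$ explicitly and then verify that $A^{(\I,r)}$ reproduces Myerson's optimal position auction exactly. Since the $n$ bidders are i.i.d.\ from $F$, they share a single virtual valuation function $\vv$, a single ironed virtual valuation function $\ivv$, a single optimal set $\I$ of ironing intervals (the value-space intervals on which $\ivv$ is flat, read off from where $R$ and $\CH(R)$ differ), and a single optimal reserve $r=\ivv^{-1}(0)$ (equivalently $F^{-1}(1-q_r)$ for $q_r=\argmax_q R^\star(q)$). The claim to prove is $A_{(pos)}^{opt}=A^{(\I,r)}$, which also closes the gap left open when Lemma~\ref{lem:tech-lemma} assumed the optimal auction lies in $\A$.

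First I would recall the structure of Myerson's optimal position auction \cite{Myerson81}: among truthful mechanisms, revenue is maximized by the (monotone) allocation rule that pointwise, on each bid profile, maximizes the ironed virtual welfare $\sum_i x_{\sigma(i)}\,\ivv(\val_i)$ over feasible slot assignments $\sigma$, with payments pinned down by Myerson's payment identity and losers paying $0$. Because the slot weights are sorted $x_1\ge x_2\ge\cdots$, this maximum is attained by the greedy assignment — highest $\ivv(\val_i)$ to the highest-weight slot, and so on, never giving a positive-weight slot to a bidder with $\ivv(\val_i)<0$ — and because $\ivv$ is nondecreasing in $\val$ the resulting rule is monotone, so its payment rule is well defined.

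Next I would match this rule to $A^{(\I,r)}$. Step (i) of $A^{(\I,r)}$ discards exactly the bids below $r=\ivv^{-1}(0)$, i.e.\ exactly the bidders with negative ironed virtual value, which are precisely the bidders Myerson's rule never assigns a positive-weight slot. For the remaining bidders I would show that the order on ironed bids used in steps (ii)–(iii) coincides with the order on ironed virtual values: within an interval $[a,b)\in\I$ both the canonical ironed bid (a common number in $[a,b)$) and $\ivv$ are constant, and across distinct intervals — or between an ironed and an un-ironed value — disjointness of the intervals together with monotonicity of $\ivv$ forces the two orders to agree. Since in a position environment the welfare computation in step (iii) depends only on the order of the ironed bids, maximizing $\sum_i x_{\sigma(i)}\,\widetilde b_i$ selects the same slot assignment (up to revenue-irrelevant ties) as maximizing $\sum_i x_{\sigma(i)}\,\ivv(\val_i)$. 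Hence $A^{(\I,r)}$ and $A_{(pos)}^{opt}$ have the same allocation rule; both being truthful with losers paying $0$, Myerson's payment identity makes their payments, and therefore their expected revenues, identical, so $A_{(pos)}^{opt}=A^{(\I,r)}\in\A$.

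The step I expect to require the most care is the order-equivalence between ironed bids and ironed virtual values, and in particular the boundary behavior when the optimal reserve quantile $q_r$ sits at the edge of (or inside) a flat peak segment of $\CH(R)$: there one must check that whichever endpoint is taken for $r$ the induced allocation and revenue are unchanged, so that the ``reserve price'' and ``ironing interval'' conventions are mutually consistent. Everything else — that pointwise ironed-virtual-welfare maximization with sorted weights is the greedy sorted assignment, and that two truthful mechanisms with the same monotone allocation rule have the same revenue — is standard Myerson theory and I would invoke it without re-deriving it.
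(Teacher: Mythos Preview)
Your proposal is correct and follows essentially the same approach as the paper: the key observation in both is that the ironed virtual valuation $\ivv$ is monotone nondecreasing in the value and constant on each ironing interval, so the greedy assignment of bidders to slots by $\ivv$ coincides (up to ties) with the assignment by ironed bids and reserve in $A^{(\I,r)}$. The paper's own proof is a three-sentence version of your argument that omits the explicit order-equivalence verification and the boundary case you flag, so your write-up is strictly more careful than what appears there.
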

\begin{proof}
	In the optimal auction, the bidder with the highest ironed virtual
	value is awarded the first position (with allocation $x_1$),
	the bidder with the second highest ironed virtual value the second
	position with $x_2$, and so on. Since the ironed virtual value
	are monotonically non-decreasing in the value of a bidder, and
	identical in ironing intervals, this can equivalently be
	described by an auction in $\A$. 
\end{proof}

\subsection{Matroid Environments}
In a matroid environment, the feasible allocations are given by matroid $\M=(E, I)$, where $E$ are the players and $I$ are independent sets. The auction can simultaneously serve only sets $S$ of players that form an independent set of the matroid $S\in I$. A special case of this is the rank $k$ uniform matroid, which accepts all subsets of size at most $k$, i.e. it is a $k$-unit auction environment.

In matroid environments, the ex-post allocation function $x_i(\bf b)$ and interim allocation function $y_i(q)$ are no longer the same for each player, e.g. imagine a player $i$ who is not part of any independent set, then $y(q)=0$ everywhere. So for matroid environments, the total additive loss is
\begin{align*}
	Rev[F,\I_{opt},r_{opt}] - Rev[F,\I_{alg},r_{alg}] &= \sum_{i=1}^n \left( -\int_0^1 \Ropt(q)y_i'(q) dq +\int_0^1 \Ralg(q)y_i'(q) dq\right)\\
&= ... \\
&= n\cdot \max_{q\in[0,1]}\left(\Ropt(q) - \Ralg(q) \right).
\end{align*}

We need to show that the optimal allocation can still be expressed in terms of an auction $A\in \A$.

\begin{fact}
\label{fact:matroid}
The optimal auction $A_{(mat)}^{opt}$ for matroid auctions can be
expressed as an auction with ironing and reserve price in value space:
$A_{(pos)}^{opt}\in \A$. 
\end{fact}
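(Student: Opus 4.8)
The plan is to invoke Myerson's characterization of the revenue-optimal truthful mechanism in a matroid feasibility environment and observe that it already has exactly the structure demanded by the definition of $\A$. By Myerson~\cite{Myerson81} (see also Hartline~\cite{HartlineXX}), since the $n$ bidders are i.i.d.\ with distribution $F$, every bidder shares the same ironed virtual valuation function $\ivv(\cdot)$. This function is nondecreasing in the value, is constant on precisely the ironing intervals in the set $\I$ produced by Myerson's ironing procedure, and satisfies $\ivv(v) \le 0$ if and only if $v < r$, where $r$ is the optimal reserve price. On a bid profile $\mathbf{b}$, the optimal auction $A_{(mat)}^{opt}$ allocates to a set $S$ maximizing $\sum_{i \in S}\ivv(b_i)$ over independent sets $S$ of $\M$ that contain only bidders with $\ivv(b_i) > 0$, and it charges the unique threshold payments (losers pay $0$, truthful bidding a dominant strategy). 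Because a maximum-weight independent set of a matroid is obtained by the greedy algorithm, raising $b_i$ can never hurt bidder $i$'s chance of being served, so this allocation rule is monotone and hence truthfully implementable, with the payments pinned down by standard single-parameter mechanism design.

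Next I would match this description against steps (i)--(iv) of $A^{(\I,r)}$. Step (i), rejecting all bids below $r$, removes exactly the bidders with $\ivv(b_i)\le 0$, by the threshold property of the monotone function $\ivv$. For the surviving bidders, step (ii) replaces each bid lying in an ironing interval $[a,b)\in\I$ with a common representative value; write $\tilde b_i$ for the resulting ``ironed bid'' of bidder $i$. Since $\ivv$ is nondecreasing in the value and is constant on exactly the intervals of $\I$, the maps $b_i\mapsto\ivv(b_i)$ and $b_i\mapsto\tilde b_i$ induce the same ordering on the surviving bidders, with ties occurring precisely among bidders whose bids fall in a common ironing interval. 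The greedy matroid algorithm's output --- the winning set --- depends only on this ordering together with a fixed tie-breaking rule, not on the numerical weights; hence the set produced by step (iii), which maximizes $\sum_{i\in S}\tilde b_i$ over independent sets, coincides with the set maximizing $\sum_{i\in S}\ivv(b_i)$. In particular $A^{(\I,r)}$ is well defined in this environment (as already noted in Section~\ref{sec:prelim}), and its allocation rule equals that of $A_{(mat)}^{opt}$. Two truthful single-parameter mechanisms with the same monotone allocation rule and the same normalization (losers pay $0$) have the same payment rule, so step (iv) also agrees, giving $A_{(mat)}^{opt} = A^{(\I,r)}\in\A$.

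The step I expect to be the main obstacle is making the ``same ordering'' argument fully rigorous: one must verify carefully that the greedy algorithm for maximum-weight independent set in a matroid depends only on the sorted order of the weights, including the correct handling of ties among bidders lying inside one ironing interval --- this is exactly the well-definedness caveat flagged right after the definition of $A^{(\I,r)}$, and it is what makes $\A$ a sensible class in matroid (and position) environments rather than in arbitrary feasibility settings. A secondary point that needs care is confirming that $\ivv(v)\le 0 \Leftrightarrow v<r$ under whatever tie-breaking convention is used for the reserve quantile, so that step (i) discards exactly the bidders with nonpositive ironed virtual value. Everything else is a direct translation of Myerson's theorem for matroid feasibility into the language of $\A$, mirroring the argument already given for position auctions in Fact~\ref{fact:position}.
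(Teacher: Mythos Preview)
Your proposal is correct and follows essentially the same argument as the paper: both hinge on the fact that the greedy algorithm computes a maximum-weight independent set using only the relative order of the weights, and that the ironed virtual valuation $\ivv$ is a monotone transformation of the bids that is constant exactly on the ironing intervals, so the winning set under $\ivv$-weights coincides (up to tie-breaking) with the winning set under the ironed bids. The paper's proof is a terse two-line version of what you wrote; your additional remarks about payment uniqueness and the reserve-price threshold are fine elaborations but not new ideas.
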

\begin{prevproof}{Fact}{fact:matroid}
	A property of matroids is that the following simple greedy algorithm yields the optimal solution:
	\begin{codebox}
		\Procname{$\proc{Greedy}(E,I)$}
		\li $S \leftarrow \emptyset$
		\li \While $\{i : i\not\in S \land S\cup \{i\}\in I \}\neq \emptyset$
		\li \Do $S \leftarrow \argmax \{ v_i : i\not\in S \land S\cup \{i\}\in I\}$
		\End
		\li \Return $S$
	\end{codebox}
	where $v_i$ is the (ironed virtual) value associated with bidder
	$i$. Since the order of largest values is the same for both
	virtual values and bids (up to ties), the allocation of the optimal auction
	is identical to the auction that irons on $\I_{opt}$ and has
	reserve price $r_{opt}$ (up to tie-braking); hence $A_{(mat)}^{opt}\in \A$. 
\end{prevproof}

\section{No-Regret Algorithm}
\label{sec:no-regret}

So far we assumed access to batch of samples before having to choose
an auction.
In this section we show that even if we start out without any samples,
but run the auction is a repeated setting, using past bids as samples
leads to a no-regret algorithm. 
The goal here is to achieve additive error
$o(T) \cdot O(\text{poly}(n,\vmax,\delta))$ --- the error can be
polynomial in all parameters except the time horizon $T$, for which it
should be strictly sublinear.
We show that for Algorithm~\ref{alg:no-regret} the total loss grows as $\widetilde O(\sqrt{T}\sqrt{n}\sqrt{\log(\delta^{-1})}\vmax)$ and hence results in a no-regret algorithm. 


\begin{algorithm}[t]
	\begin{codebox}
		\Procname{$\proc{No-Regret-Auction}(\delta, T)$}
		\li $\rhd$ Round 0:
		\li Collect a set of bids $\bf b$, run $\proc{VCG}(\bf b)$
		\li $X\leftarrow \bf b$
		\li \For round $t=1...T$ \Do
		\li Collect a set of bids $\bf b$
		\li $\proc{EmpiricalMyerson}(X, \delta/T, \bf b)$
		\li $X\leftarrow X \cup \bf b$
		\End
	\end{codebox}
	\caption{A no-regret algorithm for optimal auctions.}
	\label{alg:no-regret}
\end{algorithm}

Invoking Theorem~\ref{thm:main-theorem} with confidence
parameter~$\delta/T$ and taking a union bound, we have the following.

\begin{fact}
	\label{fact:confidence}
	With probability $1-\delta$, for all rounds simultaneously, each round $t\in [1,T]$ of Algorithm~\ref{alg:no-regret} has additive loss at most $3\sqrt{\frac{\ln(2T\delta^{-1})}{2nt}}\cdot n \cdot \vmax$.
\end{fact}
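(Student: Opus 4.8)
The plan is to reduce the statement directly to Theorem~\ref{thm:main-theorem}, since every round of Algorithm~\ref{alg:no-regret} simply runs \proc{EmpiricalMyerson} on the sample set accumulated from earlier rounds. First I would observe that all auctions actually played --- \proc{VCG} in round~$0$ and \proc{EmpiricalMyerson} in rounds $1,\dots,T$ --- are truthful, so the bids gathered in every round coincide with the bidders' valuations, which are i.i.d.\ draws from $F$. Hence, when round $t\in[1,T]$ invokes \proc{EmpiricalMyerson}$(X,\delta/T,{\bf b})$, the set $X$ consists of the $n$ bid vectors collected in rounds $0,1,\dots,t-1$, i.e.\ exactly $m=nt$ i.i.d.\ samples from $F$; moreover these samples are drawn before, and hence independently of, the $n$ fresh valuations on which the round-$t$ auction is run.

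Next I would apply Theorem~\ref{thm:main-theorem} (and its extension to matroid and position environments in Section~\ref{sec:environments}) with sample count $m=nt$ and confidence parameter $\delta/T$. Conditioned on $X$, the learned ironing intervals $\I$ and reserve price $r$ are fixed, and the theorem guarantees that with probability at least $1-\delta/T$ over the draw of $X$, the expected revenue of the round-$t$ auction (over the fresh round-$t$ valuations) is within additive
$$3\,n\,\vmax\,\sqrt{\frac{\ln\!\big(2(\delta/T)^{-1}\big)}{2m}}\;=\;3\,n\,\vmax\,\sqrt{\frac{\ln(2T\delta^{-1})}{2nt}}$$
of the optimal expected revenue --- exactly the claimed per-round bound.

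Finally I would take a union bound over the $T$ rounds: the failure event in round $t$ (the DKW-type bad event for the $nt$ samples used there) has probability at most $\delta/T$, so with probability at least $1-T\cdot(\delta/T)=1-\delta$ the per-round guarantee holds simultaneously for all $t\in[1,T]$. I do not anticipate a genuine obstacle; the only points needing care are (i) confirming that truthfulness of \proc{VCG} and \proc{EmpiricalMyerson} makes the past bids legitimate i.i.d.\ samples, so that Theorem~\ref{thm:main-theorem} is applicable round by round, and (ii) bookkeeping the sample count, so that $|X|=nt$ at the round-$t$ call and the $T$ copies of $\delta/T$ sum to $\delta$ in the union bound.
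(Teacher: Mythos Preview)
Your proposal is correct and follows exactly the same approach as the paper: invoke Theorem~\ref{thm:main-theorem} with confidence parameter $\delta/T$ and sample size $m=nt$, then union bound over the $T$ rounds. You spell out more details than the paper does (the truthfulness justification for treating past bids as i.i.d.\ samples, and the bookkeeping that $|X|=nt$ at round $t$), but the argument is identical.
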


This leads to the following no-regret bound.
\begin{theorem}
	\label{thm:iid-no-regret}
	With probability $1-\delta$, the total additive loss of
        Algorithm~\ref{alg:no-regret} is $O(\sqrt{n}\sqrt{T\log T}
        \sqrt{\log\delta^{-1}}\cdot \vmax)$, which is
        $\widetilde{O}(T^{1/2})$ with respect to $T$.
\end{theorem}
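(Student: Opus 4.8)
The plan is to bound the total additive loss $\sum_{t=1}^{T} \ell_t$, where $\ell_t$ is the per-round additive revenue loss of Algorithm~\ref{alg:no-regret} in round $t$, by substituting the per-round bound from Fact~\ref{fact:confidence} and summing the resulting series. By Fact~\ref{fact:confidence}, with probability $1-\delta$ we have simultaneously for every round $t \in [1,T]$ that
\[
\ell_t \;\le\; 3\sqrt{\frac{\ln(2T\delta^{-1})}{2nt}}\cdot n \cdot \vmax.
\]
Conditioning on this event, the total loss is at most
\[
\sum_{t=1}^{T} 3\sqrt{\frac{\ln(2T\delta^{-1})}{2nt}}\cdot n \cdot \vmax
\;=\; 3\,\vmax\sqrt{\frac{n\ln(2T\delta^{-1})}{2}}\cdot \sum_{t=1}^{T}\frac{1}{\sqrt t}.
\]

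The key analytic step is the standard estimate $\sum_{t=1}^{T} t^{-1/2} \le 1 + \int_1^{T} x^{-1/2}\,dx = 2\sqrt T - 1 \le 2\sqrt T$, obtained by comparing the sum to an integral (since $x \mapsto x^{-1/2}$ is decreasing). Plugging this in gives a total loss of at most
\[
3\,\vmax\sqrt{\tfrac{n}{2}\,\ln(2T\delta^{-1})}\cdot 2\sqrt T
\;=\; O\!\left(\sqrt{n}\,\sqrt{T}\,\sqrt{\ln(2T\delta^{-1})}\cdot \vmax\right).
\]
Finally, I would absorb the $\delta$-dependence and the $\log T$ inside the root: $\sqrt{\ln(2T\delta^{-1})} = O(\sqrt{\log T}\cdot\sqrt{\log\delta^{-1}})$ for $T,\delta^{-1}$ at least a constant, so the bound becomes $O(\sqrt n\,\sqrt{T\log T}\,\sqrt{\log\delta^{-1}}\cdot\vmax)$, which is $\widetilde O(T^{1/2})$ in $T$ as claimed. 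The whole argument holds on the probability-$(1-\delta)$ event of Fact~\ref{fact:confidence}, which already incorporates the union bound over all $T$ rounds via the confidence parameter $\delta/T$.

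There is no real obstacle here: the only mild subtlety is round~0, in which VCG is run rather than Empirical Myerson, but this contributes at most one round's worth of revenue, namely an additive $O(n\vmax)$ term that is dominated by the $\Theta(\sqrt{nT})$ bound above (and in fact round~0 is excluded from the sum, since Fact~\ref{fact:confidence} ranges over $t \in [1,T]$ and round~0 simply supplies the initial sample set $X$). Everything else is the integral comparison and bookkeeping of logarithmic factors.
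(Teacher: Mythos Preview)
Your proof is correct and follows essentially the same approach as the paper's own argument: invoke Fact~\ref{fact:confidence} for the per-round bound, sum over rounds, bound $\sum_{t=1}^T t^{-1/2}$ by $2\sqrt{T}$ via integral comparison, and absorb the round-$0$ contribution as an additive $O(n\vmax)$ term. The paper presents the same computation with the same ingredients.
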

\begin{proof}
	By Fact~\ref{fact:confidence} with probability $1-\delta$ for all rounds simultaneously, the additive loss for round $t$ is bounded by $3\sqrt{\frac{\ln(2T\delta^{-1})}{2nt}}\cdot n\cdot \vmax$. The loss of day 0 is at most $\vmax\cdot n$. The total loss can then be bounded by:
	\begin{align*}(n \cdot \vmax)\cdot \left(1+3 \sum_{t=1}^T \sqrt{\frac{\ln(2T\delta^{-1})}{2nt}}\right)
	\end{align*}
	We can rewrite the sum:
	\begin{align*}
		\sum_{t=1}^T \sqrt{\frac{\ln (2T/\delta)}{2nt}}	&= \sqrt{\frac{\ln (2T/\delta)}{2n}}\cdot \sum_{t=1}^T \sqrt{\frac{1}{t}}\\
		&\le \sqrt{\frac{\ln (2T/\delta)}{2n}}\cdot 2\sqrt{T}\\
		&= \sqrt{\frac{2T \ln (2T/\delta)}{n}}
	\end{align*}
	Hence the total loss is
	\begin{align*}
		(n \cdot \vmax)\cdot \left(1+3  \sqrt{\frac{2T \ln (4T/\delta)}{n}}\right) = O( \sqrt{n}\sqrt{T\log T} \sqrt{\log\delta^{-1}}\cdot \vmax);
	\end{align*}
	the dependence on $T$ is $O(\sqrt{T\log T})=\widetilde{O}(\sqrt{T})$.
\end{proof}

The bound of $O(\sqrt{T\log T})$ is almost tight, as there is a lower
bound of $\Omega(\sqrt{T})$ given by \cite{Cesa-Bianchi13}. Also note
that if we do not know $T$ a priori, we can use a standard doubling
argument to obtain the same asymptotic guarantee.

\bibliographystyle{plain}

\begin{thebibliography}{10}

\bibitem{ADMW13}
Pablo~Daniel Azar, Constantinos Daskalakis, Silvio Micali, and S.~Matthew
  Weinberg.
\newblock Optimal and efficient parametric auctions.
\newblock In {\em Proceedings of the Twenty-Fourth Annual {ACM-SIAM} Symposium
  on Discrete Algorithms, {SODA} 2013, New Orleans, Louisiana, USA, January
  6-8, 2013}, pages 596--604, 2013.

\bibitem{BBDSics11}
Moshe Babaioff, Liad Blumrosen, Shaddin Dughmi, and Yaron Singer.
\newblock Posting prices with unknown distributions.
\newblock In {\em Innovations in Computer Science (ICS)}. Tsinghua University
  Press, January 2011.

\bibitem{BBHM08}
Maria-Florina Balcan, Avrim Blum, Jason~D. Hartline, and Yishay Mansour.
\newblock Reducing mechanism design to algorithm design via machine learning.
\newblock {\em J. Comput. Syst. Sci.}, 74(8):1245--1270, 2008.

\bibitem{BV03}
S.~Baliga and R.~Vohra.
\newblock Market research and market design.
\newblock {\em Advances in Theoretical Economics}, 3, 2003.
\newblock Article 5.

\bibitem{CGM15}
N.~Cesa{-}Bianchi, C.~Gentile, and Y.~Mansour.
\newblock Regret minimization for reserve prices in second-price auctions.
\newblock {\em {IEEE} Transactions on Information Theory}, 61(1):549--564,
  2015.

\bibitem{Cesa-Bianchi13}
Nicol\`{o} Cesa-Bianchi, Claudio Gentile, and Yishay Mansour.
\newblock Regret minimization for reserve prices in second-price auctions.
\newblock In {\em Proceedings of the Twenty-Fourth Annual ACM-SIAM Symposium on
  Discrete Algorithms}, SODA '13, pages 1190--1204. SIAM, 2013.

\bibitem{Chawla14}
Shuchi Chawla, Jason~D. Hartline, and Denis Nekipelov.
\newblock Mechanism design for data science.
\newblock {\em CoRR}, abs/1404.5971, 2014.

\bibitem{CMZ12}
Alessandro Chiesa, Silvio Micali, and Zeyuan~Allen Zhu.
\newblock Mechanism design with approximate valuations.
\newblock In {\em Innovations in Theoretical Computer Science 2012, Cambridge,
  MA, USA, January 8-10, 2012}, pages 34--38, 2012.

\bibitem{Cole14}
Richard Cole and Tim Roughgarden.
\newblock The sample complexity of revenue maximization.
\newblock In {\em Proceedings of the 46th Annual ACM Symposium on Theory of
  Computing}, pages 243--252. ACM, 2014.

\bibitem{CM85}
Jacques Cr\'{e}mer and Richard~P. McLean.
\newblock Optimal selling strategies under uncertainty for a discriminating
  monopolist when demands are interdependent.
\newblock {\em Econometrica}, 53(2):345--361, 1985.

\bibitem{Devanur14}
Nikhil~R Devanur, Jason~D Hartline, and Qiqi Yan.
\newblock Envy freedom and prior-free mechanism design.
\newblock {\em Journal of Economic Theory}, 2014.

\bibitem{DRY10}
Peerapong Dhangwatnotai, Tim Roughgarden, and Qiqi Yan.
\newblock Revenue maximization with a single sample.
\newblock In {\em Proceedings of the 11th ACM Conference on Electronic Commerce
  {(EC)}}, pages 129--138, 2010.

\bibitem{dughmi2014sampling}
S.~Dughmi, L.~Han, and N.~Nisan.
\newblock Sampling and representation complexity of revenue maximization.
\newblock In {\em Workshop on Internet and Network Economics (WINE)}, pages
  277--291, 2014.

\bibitem{Dvoretzky56}
Aryeh Dvoretzky, Jack Kiefer, and Jacob Wolfowitz.
\newblock Asymptotic minimax character of the sample distribution function and
  of the classical multinomial estimator.
\newblock {\em The Annals of Mathematical Statistics}, pages 642--669, 1956.

\bibitem{EOS07}
Benjamin Edelman, Michael Ostrovsky, and Michael Schwarz.
\newblock Internet advertising and the generalized second-price auction:
  Selling billions of dollars worth of keywords.
\newblock {\em American Economic Review}, 97(1):242--259, 2007.

\bibitem{E07}
Edith Elkind.
\newblock Designing and learning optimal finite support auctions.
\newblock In {\em Proceedings of the eighteenth annual ACM-SIAM symposium on
  Discrete algorithms}, pages 736--745. SIAM, 2007.

\bibitem{FHHK14}
Hu~Fu, Nima Haghpanah, Jason~D. Hartline, and Robert Kleinberg.
\newblock Optimal auctions for correlated buyers with sampling.
\newblock In {\em EC}, pages 23--36, 2014.

\bibitem{G+06}
Andrew~V Goldberg, Jason~D Hartline, Anna~R Karlin, Michael Saks, and Andrew
  Wright.
\newblock Competitive auctions.
\newblock {\em Games and Economic Behavior}, 55(2):242--269, 2006.

\bibitem{Ha13}
Bach~Q Ha and Jason~D Hartline.
\newblock Mechanism design via consensus estimates, cross checking, and profit
  extraction.
\newblock {\em ACM Transactions on Economics and Computation}, 1(2):8, 2013.

\bibitem{HartlineXX}
Jason Hartline.
\newblock Mechanism design and approximation.
\newblock Book draft (September 2014), September 2014.

\bibitem{HR08}
Jason~D Hartline and Tim Roughgarden.
\newblock Optimal mechanism design and money burning.
\newblock In {\em Proceedings of the fortieth annual ACM symposium on Theory of
  computing}, pages 75--84. ACM, 2008.

\bibitem{HMR15}
Zhiyi Huang, Yishay Mansour, and Tim Roughgarden.
\newblock Making the most of your samples.
\newblock {\em Proceedings of EC'15}, 2015.

\bibitem{KL2003focs}
Robert~D. Kleinberg and Frank~Thomson Leighton.
\newblock The value of knowing a demand curve: Bounds on regret for online
  posted-price auctions.
\newblock In {\em FOCS}, pages 594--605, New York, New York, USA., 2003. IEEE
  Computer Society.

\bibitem{Massart90}
Pascal Massart.
\newblock The tight constant in the {D}voretzky-{K}iefer-{W}olfowitz
  inequality.
\newblock {\em The Annals of Probability}, pages 1269--1283, 1990.

\bibitem{MM14}
Andres~Munoz Medina and Mehryar Mohri.
\newblock Learning theory and algorithms for revenue optimization in second
  price auctions with reserve.
\newblock In {\em Proceedings of The 31st Intl. Conf. on Machine Learning},
  pages 262--270, 2014.

\bibitem{MR15}
J.~Morgenstern and T.~Roughgarden.
\newblock The psuedo-dimension of near-optimal auctions.
\newblock To appear in NIPS '16, 2015.

\bibitem{Myerson81}
Roger~B Myerson.
\newblock Optimal auction design.
\newblock {\em Mathematics of operations research}, 6(1):58--73, 1981.

\bibitem{N03}
Z.~Neeman.
\newblock The effectiveness of {E}nglish auctions.
\newblock {\em Games and Economic Behavior}, 43(2):214--238, 2003.

\bibitem{N07}
Noam Nisan.
\newblock Introduction to mechanism design (for computer scientists).
\newblock In Noam Nisan, Tim Roughgarden, Eva Tardos, and Vijay Vazirani,
  editors, {\em Algorithmic Game Theory}. Cambridge University Press, 2007.

\bibitem{segal}
I.~Segal.
\newblock Optimal pricing mechanisms with unknown demand.
\newblock {\em American Economic Review}, 93(3):509--529, 2003.

\bibitem{Sivan13}
Balasubramanian Sivan and Vasilis Syrgkanis.
\newblock {V}ickrey auctions for irregular distributions.
\newblock In {\em Web and Internet Economics}, pages 422--435. Springer, 2013.

\bibitem{V84}
Leslie~G Valiant.
\newblock A theory of the learnable.
\newblock {\em Communications of the ACM}, 27(11):1134--1142, 1984.

\bibitem{V07}
Hal~R Varian.
\newblock Position auctions.
\newblock {\em international Journal of industrial Organization},
  25(6):1163--1178, 2007.

\end{thebibliography}

\appendix
\section{Reduced Information Model}
\label{sec:information}


This appendix considers a reduced information model for i.i.d. bidders
where the auctioneer can only set a reserve price and then observes
the selling price; see also~\cite{Cesa-Bianchi13}.
The goal is to model an observer who can see the outcome of past
auctions, and perhaps submit a ``shill bid'' to set a reserve, but
cannot directly observe the bids.

We assume 
we observe $m$ samples from the second highest order statistic (so out
of $n$ i.i.d. bids, we see the second highest bid). 
%
We show that there are distributions such that, in order to get the
performance close to that of running Myerson in the reduced
information model, you first need to see at least an exponential
number of samples. This is far more than what would be needed with
regular valuation distributions, where only the monopoly price is
relevant.

\subsection{Lower Bound}

\begin{theorem}
	\label{thm:F2-lb}
	For any $\epsilon>0$, to obtain $\frac14-\epsilon$ additive loss compared to Myerson's auction, with constant probability you need $\Omega(2^n/n)$ samples from $F_{(2)}$.
\end{theorem}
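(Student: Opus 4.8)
\noindent\emph{Proof plan.}
I will build a small family of distributions that are almost indistinguishable from second-highest-bid samples but whose Myerson auctions are mutually incompatible, and then run a Le Cam--style two-point argument. Fix $c=\tfrac14$, take $K=\Theta(1/\epsilon)$, let the ``bulk'' $G$ be uniform on $[1,2]$, and for $k=1,\dots,K$ let $H_k=H_1\cdot(4\epsilon)^{-(k-1)}$ with $H_1=\Theta(2^n/n)$. Let $F_k$ draw from $G$ with probability $1-p_k$ and equal the point mass $H_k$ with probability $p_k:=c/(nH_k)$, so that $p_k\,n\,H_k=c$ for every $k$. Each $F_k$ is irregular: the atom at $H_k$ makes the revenue curve non-concave, and the Myerson-optimal auction irons an interval of the form $[2-\Theta(\sqrt{c/n}),\,H_k)$.

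\emph{Step 1: revenue gap and pigeonhole.}
I will show $\mathrm{Rev}(\mathrm{Myerson},F_k)\ge 2-o(1)+c$. The $\approx 2$ is the expected revenue of a Vickrey auction on $G$, which the bulk bidders still realize under the optimal ironing because ironing only a $\Theta(\sqrt{c/n})$-wide window at the top of $[1,2]$ costs $o(1)$; the extra $c$ comes from the event that exactly one bidder draws $H_k$ (probability $\approx p_kn$), where the Myerson-optimal ironing leaves that bidder's value \emph{outside} the ironed interval, so her threshold payment is $\Theta(H_k)$ and her contribution is $\approx p_k n H_k=c$. Conversely, describing any symmetric truthful mechanism $A$ by a monotone ``effective-bid'' map $e(\cdot)$ and letting $v^{\mathrm{esc}}(A)$ be the least $v>2$ with $e(v)>e(2)$, I will argue that $\mathrm{Rev}(A,F_k)\le 2+o(1)$ unless $v^{\mathrm{esc}}(A)\in[4\epsilon H_k,\,H_k]$: a value-$H_k$ bidder facing bulk opponents pays essentially $\min\{v^{\mathrm{esc}}(A),H_k\}$, so extracting at least $\epsilon$ from the atom forces $v^{\mathrm{esc}}(A)\ge\epsilon H_k/c=4\epsilon H_k$, while $v^{\mathrm{esc}}(A)>H_k$ makes the atom contribute $o(1)$ (and ironing any more of the bulk only hurts). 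Since $v^{\mathrm{esc}}(A)$ is a single scalar and the $H_k$ are geometrically spaced with ratio $1/(4\epsilon)$, this window catches at most two of them; hence \emph{every} mechanism $A$ has additive loss at least $c-o(1)\ge\tfrac14-\epsilon$ on all but at most two of the $F_k$.

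\emph{Step 2: indistinguishability and conclusion.}
The second-highest of $n$ i.i.d.\ draws from $F_k$ equals $H_k$ only if at least two draws hit the atom (probability $\le\binom n2 p_k^2=O(n^2/4^n)$), and otherwise its law differs between $F_k$ and $F_{k'}$ only through the mixing weight $p_k$ vs.\ $p_{k'}$ of ``the maximum of $n-1$ bulk draws'' against ``the second maximum of $n$ bulk draws''; hence $d_{\mathrm{TV}}(F_{(2)}^{\,k},F_{(2)}^{\,k'})=O(n\,|p_k-p_{k'}|)+O(n^2/4^n)=O(n/2^n)$ for all $k,k'$, using $H_1=\Theta(2^n/n)$. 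By tensorization the $m$-fold product laws are within total variation $O(mn/2^n)$, which is below $\tfrac13$ once $m\le c'\,2^n/n$ for a small constant $c'$. Suppose now an algorithm using $m\le c'\,2^n/n$ samples outputs, for every $F_k$, an auction of additive loss below $\tfrac14-\epsilon$ with probability exceeding $\tfrac23$; then its output law is the same across all $F_k$ up to total variation $<\tfrac13$, so averaging over a uniformly random $k$ and invoking Step 1 (each realized auction is ``good'' for at most two values of $k$), the probability of producing a good auction is at most $2/K+\tfrac13<\tfrac23$, whence it fails on some particular $F_{k^\star}$ with probability at least $\tfrac13$ --- a contradiction. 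Thus $m=\Omega(2^n/n)$ samples are necessary.

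\emph{Main obstacle.}
The subtle point is the converse bound of Step 1 for \emph{arbitrary} truthful mechanisms, not just those of the form $A^{(\I,r)}$. I intend to handle it through the revenue-curve identity $\mathrm{Rev}(A,F_k)=n\int_0^1 R_{F_k}(q)\,d\mu(q)$, where $\mu=d(-y_A)\ge 0$ is built from $A$'s interim allocation and single-item feasibility forces $\mu([0,1])\le 1$ and $\int_0^1 q\,d\mu\le 1/n$: the steep initial segment $R_{F_k}(q)=qH_k$ on $[0,p_k]$ can contribute $\Omega(c)$ only when $\mu$ places mass $\Omega\!\big(1/(nH_k)\big)$ essentially \emph{at} the breakpoint $q=p_k$, which is precisely the statement that $A$'s effective-bid map escapes the bulk near $H_k$. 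Carrying this out uniformly --- and dealing with the fact that the interim allocation of a fixed mechanism itself changes with $F_k$ --- is where the real work sits.
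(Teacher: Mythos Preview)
Your approach is \emph{genuinely different} from the paper's, and considerably more elaborate. The paper does not use rare atoms or a Le Cam/pigeonhole argument over $K$ distributions at all. Instead it takes just \emph{two} distributions $D_1,D_2$ that \emph{coincide on the top half of quantile space} ($q\in[0,\tfrac12]$) and differ only on the bottom half. Indistinguishability is then immediate and clean: the second-highest of $n$ i.i.d.\ draws lands in the bottom half only when at least $n-1$ draws do, which has probability $n/2^{n-1}$; until you see such a sample you learn nothing. The ``no common good auction'' step is a one-parameter tradeoff on the \emph{lower} endpoint $c$ of the ironing interval $[c,H)$: the loss on $D_1$ is $2-c$ and on $D_2$ is $c-\tfrac32$, balanced at $c=\tfrac74$ with loss $\tfrac14$ each. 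This is far simpler than your $K$-distribution pigeonhole.

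Your flagged ``main obstacle'' is a real gap, and I do not think your sketch closes it. The parameterization of an arbitrary symmetric truthful mechanism by a single ``effective-bid'' map $e(\cdot)$ with one escape value is not valid in general (randomized/ironed mechanisms can have many partial jumps), so the pigeonhole on $v^{\mathrm{esc}}$ does not cover all mechanisms. Your fallback via $\mathrm{Rev}(A,F_k)=n\int R_{F_k}\,d\mu$ is the right identity, but $\mu=d(-y_A)$ is the \emph{interim} law and therefore changes with $F_k$; you note this and do not resolve it. Even granting that the $F_k$'s are $O(1/2^n)$-close in TV (so the $y_A$'s are close), the argument you need is that $\mu$ cannot simultaneously place enough mass ``at'' each breakpoint $q=p_k$ to harvest $\Omega(c)$ from the steep segment $R_{F_k}(q)=qH_k$ for more than $O(1)$ values of $k$, \emph{while also} leaving enough mass on $(p_k,1]$ to collect the bulk $\approx 2$. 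Because the intervals $[0,p_k]$ are nested (the $p_k$ are geometrically decreasing), a simple mass-counting pigeonhole does not work; you really need a quantitative statement about where the ``big jump'' of $y_A$ sits, uniformly across $k$. This can probably be pushed through, but it is substantially more work than the paper's two-distribution route, which sidesteps the whole issue by varying the \emph{bottom} of the ironing interval rather than the top.
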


Before proving the statement, let's think about what this means: it means that if all we observe are samples from $F_{(2)}$, for large $n$, we cannot hope to get a vanishing regret in a polynomial (in $n$) number of steps.
Moreover, on this distribution, Myerson obtains at most 2 expected revenue, so we lose $\frac18$ of the profit (see corollary after the proof).

\begin{figure}
	\centering
	\subfloat[Two distributions, $D_1$ and $D_2$ that are hard to distinguish using samples from $F_{(2)}$. The distributions agree on $q\in \lbrack0,1/2\rbrack$ and disagree elsewhere. ]{\includegraphics[width=0.45\textwidth]{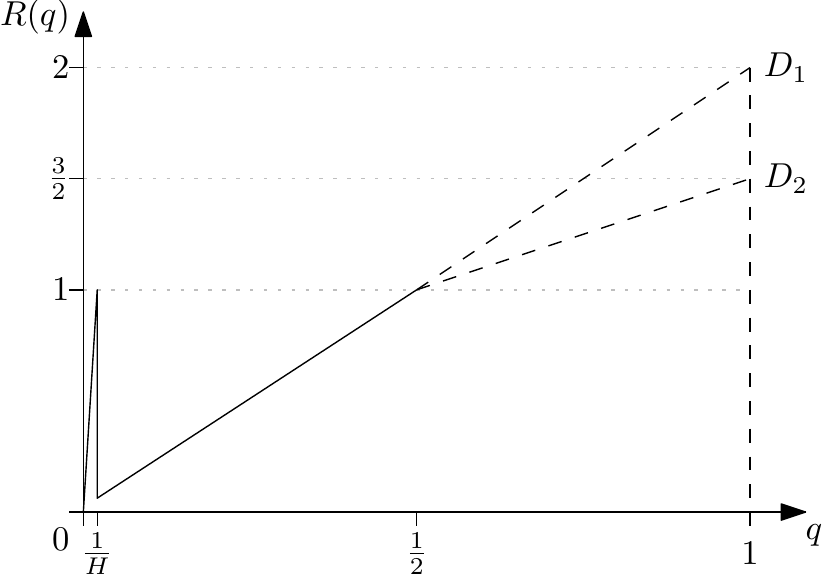}\label{fig:F2-lb}}
	\qquad
	\subfloat[Consider ironing on $[c,H)$. The top shaded area indicates the loss with respect to $D_1$ and the bottom shaded area represents the loss with respect to $D_2$.]{\includegraphics[width=.45\textwidth]{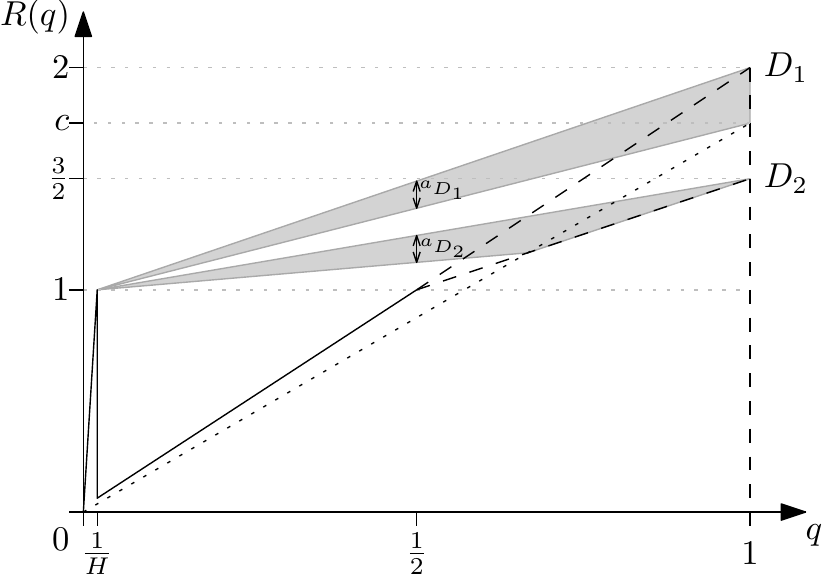}\label{fig:F2-lb-2}}
	\caption{Lower bound example for optimal auctions using samples from $F_{(2)}$.}
\end{figure}

\begin{prevproof}{Theorem}{thm:F2-lb}
	We'll give 2 distributions, $D_1$ and $D_2$, such that we need $\Omega(2^n)$ samples from $F_{(2)}$ to differentiate between the two. Moreover, we'll show that every auction that approximates the optimal auction for either $D_1$ or $D_2$, incurs an additive loss of at least $\frac12(1-\frac1{\sqrt2})$ on one of the two.
	
	We define $D_1$ and $D_2$ by their quantile function:
	
	\begin{align*}
	D^{-1}_1(q) &= \begin{cases}
	H & \text{for } q \leq \frac1H\\
	2 & \text{otherwise}
	\end{cases}\\
	D^{-1}_2(q) &= \begin{cases}
	H & \text{for } q \leq \frac1H\\
	2 & \text{for } \frac1H < q \leq \frac12\\
	1+\frac1{2q} & \text{otherwise}.
	\end{cases}
	\end{align*}
	
	Here $H\gg n$ is a sufficiently large constant. The two distributions agree on $q\in[0,\frac12]$. For $q\in (\frac12,1]$ they differ, and the effect on the revenue curve can be seen in Figure~\ref{fig:F2-lb}.
	
	To complete the proof we need to show 2 things: 1) that differentiating between $D_1$ and $D_2$ based on samples from $F_{(2)}$ requires $\Omega(2^n/2)$ samples, and 2) that no auction can simultaneously approximate the optimal solution for both.
	
	The first aspect of this is straightforward. Whenever we observe a sample from $F_{(2)}$ from the top half quantile: $q\in[0,\frac12]$, we get no information, since this is the same for both distributions. So a necessary condition for differentiating between $D_1$ and $D_2$ is if we observe a sample from $q\in(\frac12,1]$. For this to happen, we need that out of $n$ draws, $n-1$ draws are in the bottom quantile. This happens with probability $\frac{n}{2^{n-1}}$, so after $\frac{2^n}{n}$ samples, with probability approximately $1-\frac1e$ we haven't seen a sample that will differentiate the two distributions.
	
	Now we will show that until the moment when you can differentiate between $D_1$ and $D_2$, there is no way to run an auction that performs well for both. Note that the optimal auction for $D_1$ is to iron $[2,H)$ and the optimal auction for $D_2$ is to iron $[\frac32,H)$. Neither has a reserve price. The set of all auctions that could potentially work well for either, is the set of auctions that irons $[c,H)$ (see Figure~\ref{fig:F2-lb-2}).
	
	We'll only count the loss we incur in the range $[0,\frac12]$, take $H\rightarrow \infty$ and we see that if $a$ is the height of the shaded region (of $D_1$ resp. $D_2$) at $q=\frac12$, then its loss with respect to the optimal auction is:
	
	\begin{align*}
	n\cdot\int_0^{\frac12}2aq\cdot(n-1)(1-q)^{n-2} dq &=  \left[2a n q (1-q)^{n-1}\right]_0^{1/2} - 2an\int_0^{\frac12}(1-q)^{n-1} dq\\
	&=  2an\left(\frac{1}{2}\right)^{n-1} - 2an\left[\frac1n (1-q)^{n}\right]_0^{1/2}\\
	&=  2an\left(\frac{1}{2}\right)^{n-1} - 2an\frac1n \left(\frac12\right)^{n} + 2an\frac1n\\
	&=  2an\left(\frac{1}{2}\right)^{n-1}\left(1 - \frac1{2n}\right) + 2a\\
	&\geq 2a
	\end{align*}
	
	This means, that if we can show that for all choices of $c\in[3/2,1]$ at least one of $D_1, D_2$ has a large $a$, we are done. The closer $c$ is to $2$, the smaller $a_{D_2}$ is, the closer $c$ is to $3/2$, the smaller $a_{D_1}$, so we'll balance the two and show that neither is small enough.
	
	Finding $a_{D_1}$ in terms of $c$ is easy enough: $a_{D_1}=\frac{2-c}{2}$. $a_{D_2}$ is slightly harder: we first need to find the intersection of $q\cdot c$ and $R_{D_2}(q)$:
	
	\begin{align*}
	\frac12 + q &= qc\\
	(c-1)q &= \frac12\\
	q &= \frac1{2(c-1)}
	\end{align*}
	
	So the lower part of the shaded area of $D_2$ is a line that passes through the points $(0,1)$ and $(\frac{1}{2(c-1)},\frac{c}{2(c-1)})$. The line segment that connects the two is given by the equation $1+q(2-c)$ hence at $q=\frac12$ it is $2-c/2$, therefore $a_{D_2}=\frac54-2+\frac c2=\frac c2 - \frac34$.
	
	Setting $a_{D_1}=a_{D_2}$ yields $c=\frac74$ with $a_{D_1}=a_{D_2}=\frac18$. Therefore, for any auction that we decide to run will have additive loss of $2a=\frac14$ on one of the two distributions, and we need $\Omega(2^n/n)$ samples to decide which distribution we are dealing with.
\end{prevproof}

\begin{corollary}
	For any $\epsilon>0$, to obtain a $(\frac78+\epsilon)$-approximation (multiplicative) to Myerson's auction, with constant probability you need $\Omega(2^n/n)$ samples.
\end{corollary}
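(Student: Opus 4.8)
The plan is to obtain this multiplicative lower bound as an immediate corollary of the additive bound in Theorem~\ref{thm:F2-lb}: on the two hard instances $D_1,D_2$ built there, Myerson's expected revenue is only a constant, so a $(\tfrac78+\epsilon)$-multiplicative guarantee automatically entails an additive guarantee strong enough to be forbidden by Theorem~\ref{thm:F2-lb}.

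First I would record the quantitative fact already used in the remark before the proof of Theorem~\ref{thm:F2-lb}: for each of $D_1$ and $D_2$ the Myerson-optimal expected revenue is at most $2$. This follows by inspecting the revenue curves in Figure~\ref{fig:F2-lb} together with the description of the optimal auctions there (iron $[2,H)$ for $D_1$, iron $[\tfrac32,H)$ for $D_2$, no reserve in either case), letting $H\to\infty$. Now suppose, for contradiction, that there is an algorithm which, using $o(2^n/n)$ samples from $F_{(2)}$, achieves with constant probability expected revenue at least $(\tfrac78+\epsilon)$ times that of Myerson's auction on every distribution. Applied to $D_1$ (and likewise $D_2$), whose optimal revenue is at most $2$, this algorithm incurs additive loss at most $(1-\tfrac78-\epsilon)\cdot 2=\tfrac14-2\epsilon$ with constant probability. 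Thus the same algorithm is, with constant probability, a $(\tfrac14-2\epsilon)$-additive approximation on both $D_1$ and $D_2$. But Theorem~\ref{thm:F2-lb}, applied with its slack parameter set to $2\epsilon$, says that any such algorithm needs $\Omega(2^n/n)$ samples from $F_{(2)}$ --- a contradiction. Hence a $(\tfrac78+\epsilon)$-approximation requires $\Omega(2^n/n)$ samples, and unwinding the numbers ($\tfrac14$ additive loss against a revenue of $2$) gives the ``lose $\tfrac18$ of the profit'' reading stated informally above.

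The substantive parts are all inherited from Theorem~\ref{thm:F2-lb}: the indistinguishability argument (that $o(2^n/n)$ i.i.d.\ samples from $F_{(2)}$ essentially never land in the quantile range $(\tfrac12,1]$ on which $D_1$ and $D_2$ differ, so no algorithm can tell them apart) and the geometric fact that no auction ironing a single interval $[c,H)$ can be within $\tfrac14$ additively of optimal on both $D_1$ and $D_2$ simultaneously. The only thing to check here is the bookkeeping: the $\le 2$ bound on the optimal revenue of $D_1,D_2$ (which is exactly what converts a $(\tfrac18-\epsilon)$ relative loss into a $(\tfrac14-2\epsilon)$ absolute loss) and the fact that the additive slack degrades continuously with $\epsilon$, so that Theorem~\ref{thm:F2-lb}, valid for every $\epsilon>0$, may legitimately be invoked with parameter $2\epsilon$. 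I do not anticipate any real obstacle beyond this reduction; the hard work is already in Theorem~\ref{thm:F2-lb}.
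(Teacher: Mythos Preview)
Your argument is exactly the paper's: bound the revenue curve on the hard instances $D_1,D_2$ by the constant~$2$ so that Myerson's expected revenue is at most~$2$, which turns a $(\tfrac78+\epsilon)$-multiplicative guarantee into a sub-$\tfrac14$ additive one and contradicts Theorem~\ref{thm:F2-lb}. The only difference is cosmetic---you phrase it as a proof by contradiction with explicit $\epsilon$-bookkeeping, whereas the paper states the same implication directly in two lines.
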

\begin{proof}
	We can upper bound the revenue curve by the constant $2$, to show that the optimal auction cannot have expected revenue more than $2$. Since we have additive loss of $1/4$, the multiplicative approximation ratio is at most $\frac78$.
\end{proof}
\end{document}